\newtheorem{thm}{Theorem}[section]
\newtheorem{prop}[thm]{Proposition}
\theoremstyle{definition}
\newtheorem{conj}[thm]{Conjecture} 
\newtheorem{prob}[thm]{Problem}
\newtheorem{exmp}[thm]{Example}
\theoremstyle{remark}
\newcommand{\QED}{\ifhmode\unskip\nobreak\fi\quad {\rm Q.E.D.}} 
\newcommand{\R}{\mathbb{R}}
\renewcommand{\S}{\mathbb{S}}
\newcommand{\bs}{\boldsymbol}
\newcommand{\de}{{\rm de}}
\newcommand{\julia}{{\bf julia}$\mathbf{>}$\;}
\definecolor{sascha_color}{RGB}{44,227,147}
\title[Linear covariance models with numerical algebra]{Estimating linear covariance models
\\ with numerical nonlinear algebra}
\author[]{Bernd Sturmfels}
\address{Max Planck-Institute for Mathematics in the Sciences, Leipzig, Germany}
\email{bernd@mis.mpg.de}
\author[]{Sascha Timme}
\address{Institute of Mathematics, Technische Universit\"{a}t Berlin, Berlin,  Germany}
\email{timme@math.tu-berlin.de}
\author[]{Piotr Zwiernik}
\address{Dept.~of Economics and Business, Universitat Pompeu Fabra, Barcelona, Spain}
\email{piotr.zwiernik@upf.edu}
\begin{document}
\begin{abstract}
Numerical nonlinear algebra is applied to
maximum likelihood estimation 
for Gaussian models defined by linear constraints on
the covariance matrix. 
We examine the generic case as well as 
special models (e.g.~Toeplitz, sparse, trees)
that are of interest in statistics.
We study the maximum likelihood degree
and its dual analogue, and we introduce a new software package
{\tt LinearCovarianceModels.jl}
for solving the score equations.
All local maxima can thus be computed reliably. In addition we identify several scenarios 
for which the estimator is a rational function. 
\end{abstract}

\maketitle

\section{Introduction}

In many statistical applications, the covariance matrix $\Sigma$ has a special structure.
A natural setting is that one imposes linear constraints on $\Sigma$ or its inverse $\Sigma^{-1}$.
 We here study models for Gaussians whose covariance matrix~$\Sigma$
 lies in a given linear~space.
  Such linear Gaussian covariance models were introduced by Anderson~\cite{andersonLinearCovariance}.
  He was motivated by the Toeplitz structure of $\Sigma$ in time series analysis.
  Recent applications of such models include repeated time series, longitudinal data, and a range of
     engineering problems~\cite{pourahmadi1999joint}. 
    Other occurrences are Brownian motion tree models 
\cite{sturmfels2019brownian}, as well as pairwise independence models,
where some entries of $\Sigma$ are set to zero.

We are interested in maximum likelihood estimation (MLE) for linear covariance models.
This is a nonlinear algebraic optimization problem over a spectrahedral cone,
namely the convex cone of positive definite matrices $\Sigma$ satisfying the constraints.
The objective function is not convex and can 
have multiple local maxima. Yet, if the sample size is large relative to the dimension,
then the problem is essentially convex. This was shown in \cite{ZUR14}. In general,
however, the MLE problem is poorly understood, and there is a need for
accurate methods that reliably identify all local maxima.

Nonlinear algebra furnishes such a method, namely solving the
score equations \cite[Section 7.1]{Sul} using numerical homotopy continuation \cite{SW}.
This is guaranteed to find all critical points of the likelihood function
and hence all local maxima. A key step is the knowledge of the
maximum likelihood degree (ML degree). This is the number 
of complex critical points. The ML degree of a linear covariance model is an
 invariant of a linear space of symmetric matrices which is of interest in its own right.

Our presentation is organized as follows. In Section~\ref{sec2} we introduce
various models to be studied, ranging from generic linear equations to
colored graph models. In Section~\ref{sec3} we discuss the maximum
likelihood estimator as well as the dual maximum likelihood estimator.
Starting from \cite[Proposition 7.1.10]{Sul}, we derive a convenient form of the 
score equations. The natural point of entry for an algebraic geometer is the study of
generic linear constraints. This is our topic in Section~\ref{sec4}.
We compute a range of ML degrees and we compare them to the
dual degrees  in \cite[Section 2.2]{SU}.

In Section~\ref{sec5} we present our software {\tt LinearCovarianceModels.jl}.
This is written in  {\tt Julia} and it is easy to use. It
computes the ML degree and the dual ML degree
for a given subspace $\mathcal{L}$, and it determines all complex critical points for a given
sample covariance matrix $S$. Among these, it identifies the real and positive definite solutions,
and it then selects those that are local maxima. The package is available~at
\begin{equation}
\label{eq:uurrll}
 \hbox{
	\url{https://github.com/saschatimme/LinearCovarianceModels.jl}
	}
\end{equation}
This rests on the software {\tt HomotopyContinuation.jl} due to Breiding and Timme~\cite{BT}.

Section~\ref{sec6} discusses instances where
the likelihood function has multiple local maxima. This is
meant to underscore the strength of our approach.
We then turn to models where the maximum is unique and the MLE is a rational function.

In Section~\ref{sec7} we examine Brownian motion tree models. Here the
linear constraints are determined by a rooted phylogenetic tree. We study 
the ML degree and dual ML degree.  We show that the latter equals one
for binary trees, and we derive the explicit rational formula for their MLE.
A census of these degrees is found in Table~\ref{tab:BMT}.

\section{Models}\label{sec2}

Let $\mathbb S^n$ be the $\binom{n+1}{2}$-dimensional real
vector space of $n\times n$ symmetric matrices $\Sigma = (\sigma_{ij})$. 
The subset $\mathbb S^n_+$ of positive definite matrices is a full-dimensional
 open convex cone. 
 Consider any linear subspace $\mathcal L$ of $\mathbb S^n$ whose 
  intersection with $\mathbb S^n_+$ is nonempty. Then
  $\mathbb S^n_+\cap  \mathcal L$ is a relatively open
  convex cone. In optimization, where one uses the closure,
  this is known as    a {\em spectrahedral cone}. In statistics, the intersection
$  \mathbb S^n_+ \cap  \mathcal L$ is   a {\em linear covariance model}.
These are the models we study in this paper.
In what follows we discuss various families of linear spaces $\mathcal{L}$
that are of interest to us.

\smallskip

\textbf{Generic linear constraints:}
Fix a positive integer $ m \le \binom{n+1}{2}$ and suppose that
$\mathcal{L}$ is a generic linear subspace of $\mathbb S^n$. 
Here ``generic'' is meant in the sense of algebraic geometry, i.e.~$\mathcal{L}$
is a point in the Grassmannian that lies outside a certain algebraic hypersurface.
This hypersurface has measure zero, so a random subspace 
will be generic with probability one. For a geometer, it is natural to begin
with the generic case, since its complexity controls the complexity of any
special family of linear spaces. In particular, the ML degree for a generic
$\mathcal{L}$ depends only on $m$ and $n$, and this furnishes an upper bound
for the ML degree of the special families below.

\smallskip

\textbf{Diagonal covariance matrices:}
Here we take $m \le n$ and we assume that $\mathcal{L}$ is a linear
space that consists of diagonal matrices. Restricting to covariance matrices
that are diagonal is natural when modeling independent Gaussians.
We use the term {\em generic diagonal model} when $\mathcal{L}$ is
a generic point in the $(n-m)m$-dimensional Grassmannian
of $m$-dimensional subspaces inside the diagonal  $n \times n$ matrices.
\smallskip

\textbf{Brownian motion tree models:} A tree is a connected graph with no cycles. A rooted tree is obtained  by fixing a vertex, called the root, and directing all edges away from the root. Fix a rooted tree $T$ with $n$ leaves. Every vertex $v$ of $T$ defines a \emph{clade}, namely the set 
of leaves that are descendants of $v$. For the Brownian motion tree model on $T$,
   the space $\mathcal L$ is spanned by the rank-one matrices $e_Ae_A^T$, 
   where $e_A\in \{0,1\}^n$ is the indicator vector of $A$.
   Hence,
   if $\mathcal C$ is the set of all clades of $T$ then
\begin{equation}
\label{eq:treedef}
\Sigma\;=\;\sum_{A\in \mathcal C} \theta_A e_Ae_A^T,
\qquad \hbox{where $\theta_A$ are model parameters}.
\end{equation}
The linear equations for the subspace $\mathcal{L}$ are
$\sigma_{ij}=\sigma_{kl}$ whenever the least common ancestors ${\rm lca}(i,j)$ and ${\rm lca}(k,l)$ 
agree in the tree $T$. 
Assuming $\theta_A \geq 0$, the union of the models for all trees $T$ 
are characterized by the ultrametric condition 
  $\sigma_{ij}\geq \min\{\sigma_{ik},\sigma_{jk}\}\geq 0$. 
 Matrices of this form play an important role also in hierarchical clustering \cite[Section 14.3.12]{hastie2005elements}, phylogenetics \cite{felsenstein_maximum-likelihood_1973}, and random walks on graphs \cite{dellacherie}.

Maximum likelihood estimation for this class of models is generally complicated but recently there has been
progress (cf.~\cite{ane2014linear,sturmfels2019brownian}) on exploiting the nice structure of the matrices 
$\Sigma$ above.
 In Section~\ref{sec7} we study computational aspects of the MLE, and, more importantly, we provide a significant advance by considering the dual MLE.
 
 \smallskip
 
\textbf{Covariance graph models:}
We consider models $\mathcal{L}$ that arise from imposing zero restrictions on 
entries of $\Sigma$. This was studied in \cite{chaudhuri2007estimation, drton2002new}.
This is similar to Gaussian graphical models where zero restrictions are placed
on the inverse $\Sigma^{-1}$. 
We encode the sparsity structure with a graph whose edges correspond to nonzero off-diagonal entries of $\Sigma$. Zero entries in $\Sigma$ correspond to pairwise marginal independences. These arise
 in statistical modeling in the context of causal inference~\cite{cox1993linear}.
Models with zero restrictions on the covariance matrix are known as 
covariance graph models. Maximum likelihood in these Gaussian models can be carried out using Iterative Conditional Fitting \cite{chaudhuri2007estimation, drton2002new}, which is implemented in the \texttt{ggm} package in \textsc{R}~\cite{marchetti2006independencies}.

 \smallskip

\textbf{Toeplitz matrices:} Suppose $X=(X_1,\ldots,X_n)$ follows the autoregressive model of order 1, that is, $X_t=\rho X_{t-1}+\epsilon_t$, where $\rho\in \R$ and $\epsilon_t \sim N(0,\sigma)$ for some $\sigma$. Assume that the $\epsilon_t$ are mutually uncorrelated. Then ${\rm cov}(X_t,X_{t-k})=\rho^k$, and hence $\Sigma$ 
is a Toeplitz matrix. More generally, covariance matrices from stationary time series are Toeplitz. Multichannel and multidimensional  processes have covariance matrices of block Toeplitz form \cite{burg1982estimation,miller1987role}. Similarly, if $X$ follows the moving average 
process of order $q$ then ${\rm cov}(X_t,X_{t-k})=\gamma_k$ if $k\leq q$ and it is zero otherwise; see, for example, \cite[Section 3.3]{hamilton1994time}. Thus, in time series analysis, 
we encounter matrices like
\begin{equation}\label{eq:toeplitz}
\begin{small}
\begin{bmatrix}
	\gamma_0 & \gamma_1 & \gamma_2 & \gamma_3 & \gamma_4\\
	\gamma_1 & \gamma_0 & \gamma_1 & \gamma_2 & \gamma_3\\
	\gamma_2 & \gamma_1 & \gamma_0 & \gamma_1 & \gamma_2\\
	\gamma_3 & \gamma_2 & \gamma_1 & \gamma_0 & \gamma_1\\
	\gamma_4 & \gamma_3 & \gamma_2 & \gamma_1 & \gamma_0\\
\end{bmatrix}\qquad\mbox{ or }\qquad \begin{bmatrix}
	\gamma_0 & \gamma_1 & 0 & 0 & 0\\
	\gamma_1 & \gamma_0 & \gamma_1 & 0 & 0\\
	0 & \gamma_1 & \gamma_0 & \gamma_1 & 0\\
	0 & 0 & \gamma_1 & \gamma_0 & \gamma_1\\
	0 & 0 & 0 & \gamma_1 & \gamma_0\\
\end{bmatrix}	.
\end{small}
\end{equation}
We found that the ML degree for such models is surprisingly low.
This means that nonlinear algebra can reliably estimate Toeplitz matrices that are fairly large.
 
 \smallskip

\textbf{Colored covariance graph models:} A generalization of covariance graph models is
obtained by following  \cite{hojsgaard2008graphical}, 
which introduces graphical models with vertex and edge symmetries. Models of this type 
also generalize the Toeplitz matrices and the
 Brownian motion tree models. Following the standard convention we use the same colors for edges or vertices when the corresponding entries of $\Sigma$ are equal. The black color is considered neutral and encodes no restrictions. 

\begin{figure}[htp!]
	\includegraphics[scale=0.8]{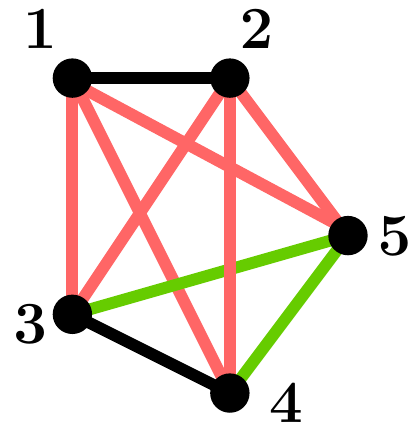}  \qquad
	\qquad\includegraphics[scale=0.5]{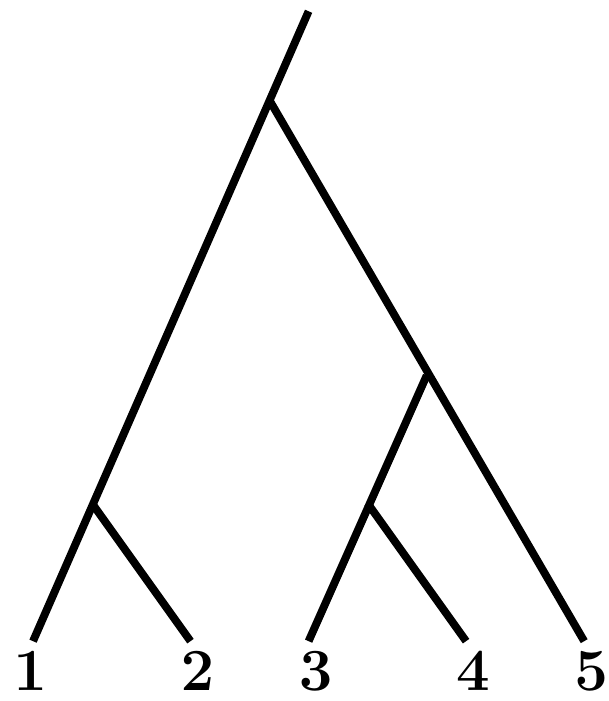}
	\caption{A covariance graph model with edge symmetries and the
	rooted tree for the  corresponding Brownian motion tree model.}\label{fig:treesymm}
\end{figure}

The Brownian motion tree model corresponds to a colored 
 model over the complete graph, where edge symmetries are encoded by the tree; cf.~Figure~\ref{fig:treesymm}. 
Also, both matrices in (\ref{eq:toeplitz}) 
represent covariance graph models with edge and vertex symmetries.

\section{Maximum likelihood estimator and its dual}\label{sec3}

Now that we have seen motivating examples, we formally define the 
MLE problem for a linear covariance model $\mathcal L$. 
Suppose we observe a random sample $\,X^{(1)},\ldots, X^{(N)}$ 
in $\mathbb{R}^n\,$ from $N_n(0,\Sigma)$. The sample covariance matrix is 
$S=\frac{1}{N}\sum_{i=1}^N X^{(i)}{X^{(i)}}^T$. 
The matrix $S$ is positive semidefinite. Our aim is to  maximize the function
$$ \qquad
\ell(\Sigma)\;=\;\log\det \Sigma^{-1}-{\rm tr}(S\Sigma^{-1})
\qquad \hbox{subject to $\Sigma\in \mathcal L$.}
$$
Following \cite[Proposition 7.1.10]{Sul}, this equals the
 log-likelihood function times $N/2$.

We fix the standard inner product $\langle A,B\rangle={\rm tr}(AB)$ on
the space $\mathbb S^n$ of symmetric matrices.
 The orthogonal complement $\mathcal L^\perp$ to a subspace $\mathcal L 
 \subset \mathbb S^n$ is defined as usual.
 
\begin{prop} \label{prop:primalMLE}
Finding all the critical points of the log-likelihood function
 amounts to solving the following system of linear and quadratic equations in $\,2 \cdot \binom{n+1}{2}$
 unknowns:
\begin{equation}\label{eq:opt_system}
 \Sigma\in \mathcal L, \qquad	K\Sigma=I_n,\qquad KSK-K\in \mathcal L^\perp.
\end{equation}
\end{prop}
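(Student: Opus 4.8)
The plan is to read (\ref{eq:opt_system}) as the first-order optimality conditions for maximizing $\ell$ over the linear space $\mathcal{L}$, with the auxiliary matrix $K$ introduced to stand for the inverse $\Sigma^{-1}$ so that the stationarity condition becomes polynomial.

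First I would invoke the elementary principle for optimizing a smooth function over a linear subspace. Because $\mathcal{L}$ is linear, its tangent space at every point equals $\mathcal{L}$ itself, so a positive definite $\Sigma \in \mathcal{L}$ is a critical point of the restriction $\ell|_{\mathcal{L}}$ precisely when the directional derivative $D\ell(\Sigma)[\dot\Sigma]$ vanishes for every $\dot\Sigma \in \mathcal{L}$. With respect to the trace inner product $\langle A, B\rangle = {\rm tr}(AB)$ fixed above, this is the statement that the gradient $\nabla\ell(\Sigma) \in \mathbb{S}^n$ lies in the orthogonal complement $\mathcal{L}^\perp$.

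The computational core is to identify this gradient. Writing $K = \Sigma^{-1}$ and differentiating $\ell(\Sigma) = -\log\det\Sigma - {\rm tr}(S\Sigma^{-1})$ in an arbitrary symmetric direction $\dot\Sigma$, the standard matrix identities $D\log\det\Sigma[\dot\Sigma] = {\rm tr}(\Sigma^{-1}\dot\Sigma)$ and $D\Sigma^{-1}[\dot\Sigma] = -\Sigma^{-1}\dot\Sigma\,\Sigma^{-1}$ yield
\[ D\ell(\Sigma)[\dot\Sigma] \;=\; -\,{\rm tr}(K\dot\Sigma) + {\rm tr}(KSK\dot\Sigma) \;=\; \langle KSK - K,\, \dot\Sigma\rangle. \]
Since $K$, $S$, and therefore $KSK$ are symmetric, $\nabla\ell(\Sigma) = KSK - K$ is a genuine element of $\mathbb{S}^n$, and the orthogonality condition above reads $KSK - K \in \mathcal{L}^\perp$. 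Combined with the constraint $\Sigma \in \mathcal{L}$ and the defining relation $K = \Sigma^{-1}$, recorded as the quadratic equation $K\Sigma = I_n$, this is exactly the system (\ref{eq:opt_system}).

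The only step that requires genuine care is the passage from the rational stationarity condition to a polynomial system. Rather than substitute $K = \Sigma^{-1}$ and clear denominators by hand, I would introduce the $\binom{n+1}{2}$ entries of $K$ as a fresh block of unknowns alongside the $\binom{n+1}{2}$ entries of $\Sigma$; the equation $K\Sigma = I_n$ then simultaneously enforces the identification and removes the inverse, leaving a system of linear and quadratic equations in $2\cdot\binom{n+1}{2}$ variables. I would conclude by observing that on the locus where $\Sigma$ is invertible — in particular throughout the spectrahedral cone $\mathbb{S}^n_+ \cap \mathcal{L}$ where the likelihood is defined — the polynomial formulation and the original optimization have identical solution sets, so (\ref{eq:opt_system}) captures all critical points and introduces no spurious ones.
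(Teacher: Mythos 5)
Your proposal is correct and follows essentially the same route as the paper: compute the directional derivative of $\ell$ at $\Sigma$ in a direction $U\in\mathcal{L}$ using the standard identities for $D\log\det$ and $D\Sigma^{-1}$, obtain $\langle KSK-K,U\rangle$, and translate its vanishing for all $U\in\mathcal{L}$ into $KSK-K\in\mathcal{L}^\perp$, with $K\Sigma=I_n$ encoding the inverse polynomially. Your closing remark that the polynomial system and the original critical-point problem agree on the locus where $\Sigma$ is invertible is a small but welcome addition that the paper leaves implicit.
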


\begin{proof}
The matrix $\Sigma$ is a critical point  $\ell$ if and only if, for every $U\in \mathcal L$, the 
derivative of $\ell$ at $\Sigma$ in the direction $U$ vanishes. This directional derivative equals
	$$
	-{\rm tr}(\Sigma^{-1}U)+{\rm tr}(S\Sigma^{-1}U\Sigma^{-1}).
	$$ 
	This formula follows by multivariate calculus from two facts: (i) the derivative of the matrix mapping $\Sigma\mapsto \Sigma^{-1}$ is the linear transformation $U\mapsto \Sigma^{-1}U\Sigma^{-1}$;
	 (ii) the derivative of the function $\Sigma\mapsto \log\det\Sigma$ is the linear functional $U\mapsto {\rm tr}(\Sigma^{-1}U)$. 
	 
	 Using the identity $K=\Sigma^{-1}$,  vanishing of the directional derivative is equivalent~to 
	$$
	-\langle K,U\rangle+\langle KSK,U\rangle\;=\;0.
	$$
 The condition $\langle KSK-K,U\rangle=0$ for all $U\in \mathcal L$ is
 equivalent to $KSK-K\in \mathcal L^\perp$. 
\end{proof}

\begin{exmp}[$3 \times 3$ Toeplitz matrices] \label{ex:toeplitz33primal}
Let $\mathcal{L}$ be the space of Toeplitz matrices
$$ \Sigma \,\, = \,\,
\begin{bmatrix}
	\gamma_0 & \gamma_1 & \gamma_2 \\
		\gamma_1 & \gamma_0 & \gamma_1 \\
			\gamma_2 & \gamma_1 & \gamma_0 
\end{bmatrix}.
$$
This space has dimension $3$ in $\mathbb{S}^3 \simeq \mathbb{R}^6$. 
Fix a sample covariance matrix $S = (s_{ij})$ with real entries.
We need to solve the  system (\ref{eq:opt_system}). This consists of
$3 + 9 + 3 = 15$ equations in $6+6=12$ unknowns, namely
the entries of the covariance matrix $\Sigma = (\sigma_{ij})$ and its
inverse $K = (k_{ij})$. 
The condition  $\, \Sigma\in \mathcal L\,$
gives three linear polynomials:
$$
\sigma_{11}- \sigma_{33} \,,\,\,\,  \sigma_{12}- \sigma_{23}\,,\,\,\,  \sigma_{22}- \sigma_{33}.
 $$
The condition 
 $K\Sigma=I_3$ translates into nine bilinear polynomials:
$$ \begin{small} \begin{matrix}
 \sigma_{11} k_{11}+ \sigma_{12} k_{12}+ \sigma_{13} k_{13}-1, \,
 \sigma_{12} k_{11} + \sigma_{22} k_{12}+ \sigma_{23} k_{13}, \,
      \sigma_{13} k_{11}+ \sigma_{23} k_{12}+ \sigma_{33} k_{13}, \\
      \sigma_{11} k_{12}+ \sigma_{12} k_{22}+ \sigma_{13} k_{23},\,
      \sigma_{12} k_{12}+ \sigma_{22} k_{22}+ \sigma_{23} k_{23}-1, \,
      \sigma_{13} k_{12}+ \sigma_{23} k_{22}+ \sigma_{33} k_{23}, \\
      \sigma_{11} k_{13}+ \sigma_{12} k_{23}+ \sigma_{13} k_{33}, \,
      \sigma_{12} k_{13}+ \sigma_{22} k_{23}+ \sigma_{23} k_{33},\,
      \sigma_{13} k_{13}+ \sigma_{23} k_{23}+ \sigma_{33} k_{33}-1.
\end{matrix}    \end{small}   $$
Finally,  the condition    
$ KSK-K\in \mathcal L^\perp$ translates into three quadratic polynomials:
$$ \begin{small}
\begin{matrix}
k_{11}^2 s_{11}+k_{12}^2 s_{11}+k_{13}^2 s_{11}+2 k_{11} k_{12} s_{12}
+2 k_{12} k_{22} s_{12}+2 k_{13} k_{23} s_{12}+2 k_{11} k_{13} s_{13} \\
+2 k_{12} k_{23} s_{13}  +2 k_{13} k_{33} s_{13}+k_{12}^2 s_{22}
+k_{22}^2 s_{22}+k_{23}^2 s_{22}+2 k_{12} k_{13} s_{23} 
+2 k_{22} k_{23} s_{23} \\ +2 k_{23} k_{33} s_{23}
+k_{13}^2 s_{33}+k_{23}^2 s_{33}+k_{33}^2 s_{33}-k_{11}-k_{22}-k_{33},
\end{matrix}
\end{small}
$$
$$
 \begin{small}
\begin{matrix}
      k_{23} s_{13} +k_{12} k_{33} s_{13}+k_{12} k_{22} s_{22}+k_{22} k_{23} s_{22}
      +k_{13} k_{22} s_{23}+k_{12} k_{23} s_{23} \\ +k_{23}^2 s_{23}
      +k_{22} k_{33} s_{23}+k_{13} k_{23} s_{33}+k_{23} k_{33} s_{33}-k_{12}-k_{23},
      \end{matrix}
\end{small}
$$

$$
 \begin{small}
\begin{matrix}
k_{11} k_{13} s_{11}+k_{12} k_{13} s_{12}+k_{11} k_{23} s_{12}+k_{13}^2 s_{13}
+k_{11} k_{33} s_{13} \\ 
+k_{12} k_{23} s_{22}+k_{13} k_{23} s_{23}+k_{12} k_{33} s_{23}
+k_{13} k_{33} s_{33}-k_{13}.
\end{matrix}
\end{small}
$$
The zero set of these $15$ polynomials in $12$ unknowns consists of three points
$(\hat\Sigma,\hat K)$. 
All three solutions are real and positive definite for the sample covariance matrix
\begin{equation}
\label{eq:saschamatrix} S \,\,  = \,\, \begin{small}
\begin{bmatrix}
  4/ 5 &  -9/ 5 &  -1/25 \\
 -9/5 &  79/16 & 25/24 \\
 -1/25 & 25/24 & 17/16
 \end{bmatrix}
 \,\, \approx \,\, \begin{bmatrix}
\phantom{-}0.8000 & -1.8000 & -0.0400 \\
 -1.8000 & \phantom{-}4.9375 & \phantom{-}1.0417\\
  -0.0400 & \phantom{-}1.0417 &   \phantom{-}  1.0625 
  \end{bmatrix}. \end{small}
 \end{equation}
 Namely, the three Toeplitz matrices  that solve the score equations
 for this $S$ are
 $$ \begin{matrix}
[\,\hat \gamma_0 \,,\,\, \hat \gamma_1\,,\,\, \hat \gamma_2 \,] && \quad \hbox{log-likelihood value} &  \\
 [2.52783, -0.215929, -1.45229] &&  -5.349366256855714 & \qquad \hbox{global maximum}  \\
  [2.39038, -0.286009, 0.949965] && -5.407087529857949 & \qquad \hbox{local maximum} \\
 [2.28596, -0.256394, 0.422321] &&  -5.415351037271449 & \quad \ \  \hbox{saddle point}
 \end{matrix}
 $$
So, even in this tiny example, our optimization problem has multiple local maxima
in the cone $\mathbb{S}^3_+$. A numerical
study of this phenomenon will be presented in Section~\ref{sec6}.
\end{exmp}     

Maximum likelihood is usually the preferred estimator 
in statistical analysis. However, for reasons like robustness and computational efficiency, it is customary 
 to also consider alternative estimators. A large family of estimators of general interest are 
 the M-estimators, which are all asymptotically normal and consistent under relatively minor conditions; see \cite[Sections 6.2-6.3]{huber2011robust}. A special case of an M-estimator is the 
 {\em dual maximum likelihood estimator} which we now discuss in more detail.

Dual estimation is based on the maximization of a dual likelihood function. In the Gaussian case this is motivated by interchanging the role of the parameter matrix $\Sigma$ and the empirical covariance matrix $S$. 
The {\em Kullback-Leibler divergence} of two Gaussian distributions $N(0,\Sigma_0)$ and 
$N(0,\Sigma_1)$ on $\mathbb{R}^n$ is equal to
 $$
 {\rm KL}(\Sigma_0,\Sigma_1)\;=\;\frac{1}{2}\left({\rm tr}(\Sigma_1^{-1}\Sigma_0)-n+\log\left(\frac{\det\Sigma_1}{\det\Sigma_0}\right)\right).
 $$
 Computing the MLE is equivalent to minimizing $ {\rm KL}(\Sigma_0,\Sigma_1)$ with respect to $\Sigma_1$ with $\Sigma_0=S$. On the other hand, the dual MLE is obtained by minimizing $ {\rm KL}(\Sigma_0,\Sigma_1)$ with respect to $\Sigma_0$ with $\Sigma_1=S$. Equivalently, we 
 set  $W=S^{-1}$ and we maximize
 $$
 \ell^\vee(\Sigma)\;=\;\log\det\Sigma-{\rm tr}(W\Sigma).
 $$

 The idea of utilizing the ``wrong'' Kullback-Leibler distance is ubiquitous in variational inference and is central for mean field approximation and related methods. The idea of using this estimation method for Gaussian linear  covariance models is very natural. It results in a unique maximum, since $\Sigma \mapsto \ell^\vee(\Sigma)$ is a convex 
 function on the positive definite cone $\mathbb{S}^n_+$.
 See \cite{christensen1989statistical} and also
  \cite[Section 3.2]{chaudhuri2007estimation} and
   \cite[Section 4]{kauermann1996dualization}.
   
 The following algebraic formulation is the analogue to Proposition~\ref{prop:primalMLE}.
 
\begin{prop} \label{prop:dualMLE}
Finding all the critical points of the dual log-likelihood function $\ell^\vee$
 amounts to solving the following system of equations in $\,2 \cdot \binom{n+1}{2}$
 unknowns:
\begin{equation}\label{eq:opt_system2}
 \Sigma\in \mathcal L, \qquad	K\Sigma=I_n,\qquad 
  K-W\in \mathcal L^\perp.
\end{equation}
\end{prop}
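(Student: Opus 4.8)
The plan is to follow the proof of Proposition~\ref{prop:primalMLE} almost verbatim, exploiting the fact that the dual objective $\ell^\vee$ is actually simpler than the primal one. A positive definite matrix $\Sigma$ is a critical point of $\ell^\vee$ restricted to $\mathcal L$ precisely when, for every direction $U \in \mathcal L$, the directional derivative of $\ell^\vee$ at $\Sigma$ along $U$ vanishes. So the first step is to compute this derivative.

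For the two summands I would reuse the two calculus facts invoked in the primal proof. By fact (ii), the derivative of $\Sigma \mapsto \log\det\Sigma$ in the direction $U$ is $\mathrm{tr}(\Sigma^{-1}U)$. The second summand $\Sigma \mapsto \mathrm{tr}(W\Sigma)$ is \emph{linear} in $\Sigma$, so its derivative in the direction $U$ is simply $\mathrm{tr}(WU)$; no analogue of fact (i) is needed here, which is exactly why the dual problem is easier. Hence the directional derivative of $\ell^\vee$ at $\Sigma$ equals $\mathrm{tr}(\Sigma^{-1}U) - \mathrm{tr}(WU)$.

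The second step is to rewrite this using $K = \Sigma^{-1}$ and the inner product $\langle A, B\rangle = \mathrm{tr}(AB)$. The derivative becomes $\langle K, U\rangle - \langle W, U\rangle = \langle K - W, U\rangle$. Requiring this to vanish for all $U \in \mathcal L$ is, by definition of the orthogonal complement, equivalent to $K - W \in \mathcal L^\perp$. Appending the constraint $\Sigma \in \mathcal L$ and encoding the relation $K = \Sigma^{-1}$ as the bilinear system $K\Sigma = I_n$ then yields exactly (\ref{eq:opt_system2}).

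I do not expect a genuine obstacle here, since the derivation is shorter than the primal one: the trace term is linear rather than quadratic in the inverse. The only point worth stating explicitly is that everything takes place on the open set where $\Sigma$ is invertible (indeed positive definite), so that $\log\det\Sigma$ and $K = \Sigma^{-1}$ are well defined and the passage between $K = \Sigma^{-1}$ and $K\Sigma = I_n$ is legitimate. As a sanity check, note that the $\log\det$ term now carries the opposite sign to the primal case (here $\log\det\Sigma$ rather than $\log\det\Sigma^{-1}$), so that $K$ enters the orthogonality condition with a plus sign; together with the linearity of the trace term, this is consistent with replacing the expression $KSK - K$ of (\ref{eq:opt_system}) by the simpler $K - W$ of (\ref{eq:opt_system2}).
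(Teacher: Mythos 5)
Your argument is correct, but it is not the route the paper takes. The paper's proof of Proposition~\ref{prop:dualMLE} is a one-line reduction: after interchanging the roles of $K$ and $\Sigma$ (and of $W$ and $S$), maximizing $\ell^\vee$ becomes exactly the MLE problem for a \emph{linear concentration model}, and the system (\ref{eq:opt_system2}) is then quoted from equation (10) of the Sturmfels--Uhler paper \cite{SU}. You instead give a self-contained first-principles computation, differentiating $\log\det\Sigma - \mathrm{tr}(W\Sigma)$ along a direction $U \in \mathcal L$ and using only fact (ii) from the primal proof plus linearity of the trace term; this is a correct and arguably cleaner derivation, and your observation that no analogue of fact (i) is needed pinpoints precisely why the dual score equations are linear rather than quadratic in $K$. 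What the paper's shorter proof buys is the explicit identification of the dual MLE with the concentration-model MLE, a duality it exploits repeatedly afterwards (e.g.\ the column symmetry of the dual ML degrees via \cite[Theorem 2.3]{SU} and Proposition~\ref{prop:decomposable}); what your proof buys is independence from the external reference and a transparent parallel with Proposition~\ref{prop:primalMLE}. Both are sound.
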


\begin{proof}
After switching the role of $K$ and $\Sigma$, and of $W$ and $S$, our problem becomes
 MLE for linear concentration models. Formula (\ref{eq:opt_system2}) is found in
 \cite[equation (10)]{SU}.
\end{proof}
 
 The next result lists properties of the dual MLE that are important for statistics.
 
 \begin{prop}\label{prop:dualstat}
 The dual maximum likelihood estimator of a Gaussian
 linear covariance model is consistent, asymptotically normal, and first-order efficient.	
 \end{prop}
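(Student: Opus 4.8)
The plan is to establish the three asymptotic properties by recognizing the dual MLE as a specific instance of an M-estimator and then invoking the classical asymptotic theory, while verifying the regularity conditions that specialize in the Gaussian linear covariance setting. Recall from the preceding discussion that the dual MLE maximizes $\ell^\vee(\Sigma) = \log\det\Sigma - {\rm tr}(W\Sigma)$ with $W = S^{-1}$, where $S$ is the sample covariance matrix. Since $S \to \Sigma_0$ almost surely under $N_n(0,\Sigma_0)$ with $\Sigma_0 \in \mathcal{L}$, we have $W = S^{-1} \to \Sigma_0^{-1} = K_0$ almost surely, so the dual objective converges to its population counterpart uniformly on compact subsets of $\mathbb{S}^n_+ \cap \mathcal{L}$.

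\textbf{Consistency.} First I would identify the population version of the objective. Minimizing ${\rm KL}(\Sigma, S)$ over $\Sigma \in \mathcal{L}$ corresponds, in the limit, to minimizing ${\rm KL}(\Sigma, \Sigma_0)$, which by the standard property of Kullback-Leibler divergence is uniquely minimized at $\Sigma = \Sigma_0$ since $\Sigma_0 \in \mathcal{L}$. The key point is that $\Sigma \mapsto \ell^\vee(\Sigma)$ is \emph{strictly concave} on $\mathbb{S}^n_+$ (equivalently, the divergence is convex), as noted in the excerpt; strict concavity on the intersection with the linear space $\mathcal{L}$ guarantees a unique maximizer. Consistency then follows from the standard argmax theorem: uniform convergence of the random objective to a population objective with a well-separated unique maximizer forces the estimator sequence to converge to that maximizer.

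\textbf{Asymptotic normality and efficiency.} Next I would set up the estimating equation. The dual MLE solves the score equation $K - W \in \mathcal{L}^\perp$ from Proposition~\ref{prop:dualMLE}, which is exactly the first-order stationarity condition of an M-estimator with the dual log-likelihood as criterion. Writing $\psi$ for the gradient of $\ell^\vee$ projected onto $\mathcal{L}$, I would apply the delta-method/Taylor-expansion argument standard for M-estimators: expand the estimating equation around $\Sigma_0$, use the central limit theorem for $\sqrt{N}(S - \Sigma_0)$ (equivalently for $\sqrt{N}(W - K_0)$ via the delta method on matrix inversion), and invert the Hessian of $\ell^\vee$ at $\Sigma_0$. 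This yields $\sqrt{N}(\hat\Sigma^\vee - \Sigma_0) \rightsquigarrow N(0, V)$ for an explicit asymptotic covariance $V$. For first-order efficiency, the crucial observation is that at the true point $\Sigma_0$ the derivative of the matrix inverse map linearizes so that the dual score and the primal score \emph{agree to first order}; hence the two estimators have the same asymptotic variance, which equals the Cramér-Rao bound. This is precisely the sense in which the ``wrong'' Kullback-Leibler direction costs nothing asymptotically.

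\textbf{The main obstacle} will be pinning down first-order efficiency rigorously rather than merely asserting asymptotic normality. Consistency and asymptotic normality are essentially automatic from M-estimation theory once the regularity conditions (smoothness of $\ell^\vee$, nonsingularity of the Hessian at $\Sigma_0$, existence of the relevant moments of $S$) are checked, and these hold because $\log\det$ and ${\rm tr}(W\,\cdot\,)$ are smooth on $\mathbb{S}^n_+$ and $S$ is a sample covariance with all moments finite. The delicate step is the computation that the asymptotic covariance $V$ of the dual estimator coincides with the inverse Fisher information of the model, equivalently that $\hat\Sigma^\vee$ and the true MLE are first-order equivalent; this requires carefully matching the Hessian of the dual objective at $\Sigma_0$ against the Fisher information, using that $W = S^{-1}$ introduces an extra factor $K_0 \otimes K_0$ (from differentiating matrix inversion) that cancels precisely at the truth. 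I would therefore concentrate the detailed calculation on this Hessian/information comparison, and for the remaining structure cite the general M-estimator results (e.g.~\cite[Sections 6.2-6.3]{huber2011robust}) together with the dual-likelihood references \cite{christensen1989statistical,kauermann1996dualization} already invoked above.
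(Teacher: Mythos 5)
Your proposal is correct in outline, but it takes a genuinely different route from the paper: the paper's entire proof is a citation to Theorems 3.1 and 3.2 of \cite{christensen1989statistical}, which establish these properties for I-projections within exponential families (of which the Gaussian linear covariance model is a special case). You instead sketch a self-contained M-estimation argument, and you correctly locate the one nontrivial computation: at the true $\Sigma_0$ the dual score $\mathrm{tr}((K_0 - S^{-1})U)$ linearizes, via $S^{-1} = K_0 - K_0(S-\Sigma_0)K_0 + O(\|S-\Sigma_0\|^2)$, to $\mathrm{tr}((S-\Sigma_0)K_0UK_0)$, which is exactly the primal score at $\Sigma_0$; and the dual Hessian $-\mathrm{tr}(K_0VK_0U)$ restricted to $\mathcal{L}$ coincides with the negative Fisher information, so the two estimating equations agree to first order and the asymptotic variances match. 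That is precisely why first-order efficiency holds, and it is the content hidden inside the cited theorems. What the paper's approach buys is brevity and a rigorous treatment of the regularity conditions already done elsewhere; what yours buys is transparency about where the ``wrong'' Kullback--Leibler direction costs nothing. To turn your sketch into a complete proof you would still need the routine verifications (well-separatedness of the population maximizer on $\mathcal{L}\cap\mathbb{S}^n_+$ to rule out escape to the boundary, invertibility of $S$ for $N\ge n$, nonsingularity of the restricted Hessian), but none of these present any difficulty in the Gaussian setting.
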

 \begin{proof}
 See Theorem~3.1 and Theorem~3.2 in \cite{christensen1989statistical}.
 \end{proof}

 First-order efficiency means that the asymptotic variance of the properly normalized dual MLE is optimal, that is, it equals
  the asymptotic variance of the MLE. For finite samples the dual MLE is less efficient than the MLE, 
  but the computational cost of the MLE is higher.
   In practice, one rarely
  has access to the real MLE and one uses a local maximum of the likelihood function.
  Thus, the advantage in efficiency is smaller than predicted by the theory. 
 The dual MLE offers a convenient alternative.
 
In this paper, we focus on algebraic structures, and we note the following
important distinction between our two estimators. The
MLE requires the quadratic equations $KSK-K\in \mathcal L^\perp$ in (\ref{eq:opt_system}),
whereas the dual MLE requires the linear equations  $ K-W\in \mathcal L^\perp$ in (\ref{eq:opt_system2}).
The latter are easier to solve than the former, and they give far fewer solutions.
This is quantified by the tables for the ML degrees in the next sections.

We are particularly interested in models whose dual ML estimator $(\check \Sigma, \check K)$ 
can be written as
 an explicit expression in the sample covariance matrix $S$.  We identify such scenarios
in Sections \ref{sec6} and \ref{sec7}. Here is a first example to illustrate this point.

\begin{exmp}
We revisit the Toeplitz model in Example~\ref{ex:toeplitz33primal}.
For the dual MLE, the three quadratic polynomials in $K$ are now replaced by
three linear polynomials:
$$
 k_{11}+k_{22}+k_{33}-w_{11}-w_{22}-w_{33} \, ,\,\,
k_{12}+k_{23}-w_{12}-w_{23} \, , \,\,
k_{13}-w_{13}.
$$
The $w_{ij}$ are the entries of the inverse sample covariance matrix $W = S^{-1}$.
The new system has two solutions, and we can write the $\check \sigma_{ij}$ and
$\check k_{ij}$ in terms of the $w_{ij}$ (or the $s_{ij}$) using the familiar formula
for solving quadratic equations in one variable.
Specifically, for the covariance matrix $S$ in
(\ref{eq:saschamatrix}) we find that the dual MLE is given~by
$$ \begin{matrix}
[\,\check \gamma_0, \,\check \gamma_1,\,\check \gamma_2\,] \,\, = \,\,
 \bigl[  0.203557267562 \,,\,\,
   -0.189349961613\,,\,\,
   0.1963649733282  \, \bigr] \qquad \qquad \quad  \medskip  \\ = \,\,
\bigl[ \,{\frac{1284368265268038839512}{12363704694314904961417}}+{\frac {52\,\sqrt {561647777654592987689702150027364667081}}{
12363704694314904961417}} \, , \qquad \medskip \\ \qquad
-{\frac{5817390611804320873051}{61818523471574524807085}}-{\frac {655679934637\,\sqrt {
561647777654592987689702150027364667081}}{163146905524715599705244729886305}} \, ,
\medskip
\\
\qquad \qquad \,\,
{\frac{1990451408446510673691859}{
22254668449766828930550600}}+{\frac {264990063915733\,\sqrt {561647777654592987689702150027364667081}}{
58732885988897615893888102759069800}} \, \bigr].
\end{matrix}
$$
Needless to say, nonlinear algebra goes much beyond the quadratic formula.
In what follows we shall employ state-of-the-art methods for solving polynomial equations.
\end{exmp}

\section{General Linear Constraints}\label{sec4}

The \emph{maximum likelihood degree} 
of a linear covariance model 
$\mathcal{L}$ is, by definition, the number of complex solutions
to the likelihood equations \eqref{eq:opt_system} for generic data $S$.
This is abbreviated {\em ML degree}; see \cite[Section 7.1]{Sul}.
To compute the ML degree, take $S$ to be a random symmetric $n \times n$ matrix and
 count all complex critical points of the likelihood function $\ell(\Sigma)$ for $\Sigma \in \mathcal{L}$.
Equivalently, the ML degree of the model $\mathcal{L}$ is the number of complex
solutions $(\Sigma, K)$ to the polynomial equations in~(\ref{eq:opt_system}).                                           

We also consider the complex critical points of
 the dual likelihood function $\ell^\vee(\Sigma)$.
Their number, for a generic matrix $S \in \S^n$, is the {\em dual ML degree} of $\mathcal{L}$.
It coincides with the number of  complex
solutions $(\Sigma, K)$ to the polynomial equations in~(\ref{eq:opt_system2}).

Our ML degrees can be computed symbolically in a computer algebra
system that rests on Gr\"obner bases.  However, this approach is limited to small instances.
To get further, we use the methods
from numerical nonlinear algebra described in Section~\ref{sec5}.

We here focus on a generic $m$-dimensional linear
subspace $\mathcal{L}$  of $\S^n$. In practice this means that 
a basis for $\mathcal{L}$ is chosen by sampling $m$ matrices at random from $\S^n$.

\begin{prop}
The ML degree and the dual ML degree of a generic subspace $\mathcal{L}$
of dimension $m$ in $\S^n$
depends only on $m$ and $n$. It is independent
of the particular choice of $\mathcal{L}$.
For small parameter values, these ML degrees 
are listed in Table \ref{table:ml-generic}.
\end{prop}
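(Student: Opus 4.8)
The plan is to exhibit both ML degrees as the cardinality of a generic fiber of a single projection from an incidence variety, and then to invoke the constancy of generic fiber cardinality over an irreducible base. Write $N=\binom{n+1}{2}$ and take as parameter space the irreducible variety $\mathcal{P}=\mathrm{Gr}(m,N)\times \S^n$, whose points are pairs $(\mathcal{L},S)$. Over $\mathcal{P}$ I would form the incidence variety
$$
V \;=\; \bigl\{\,(\mathcal{L},S,\Sigma,K)\in \mathcal{P}\times \S^n\times \S^n \;:\; \Sigma\in\mathcal{L},\ K\Sigma=I_n,\ KSK-K\in\mathcal{L}^\perp\,\bigr\},
$$
which is precisely the solution set of the score equations (\ref{eq:opt_system}) fibered over all choices of subspace and data. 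Let $\pi\colon V\to\mathcal{P}$ be the projection forgetting $(\Sigma,K)$. By definition, the ML degree of a fixed $\mathcal{L}$ is the number of points in the fiber of $\pi$ over $(\mathcal{L},S)$ for generic $S$.

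The core of the argument is the standard fact that a dominant morphism $\pi\colon V\to\mathcal{P}$ with $\mathcal{P}$ irreducible and \emph{finite} generic fiber has constant fiber cardinality on a dense Zariski-open subset $U\subseteq\mathcal{P}$; call this common value $d$. I would obtain this from generic flatness, or equivalently from Chevalley's theorem on constructible images together with upper semicontinuity of fiber dimension. Because $\mathcal{P}=\mathrm{Gr}(m,N)\times\S^n$ is a product of irreducibles, a slice argument applies: writing the complement of $U$ as a proper closed set and projecting it to the Grassmannian factor, one sees that for $\mathcal{L}$ in a dense open subset of $\mathrm{Gr}(m,N)$ the set $\{S:(\mathcal{L},S)\in U\}$ is dense and open in $\S^n$. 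Hence for every generic $\mathcal{L}$ the generic-$S$ fiber cardinality equals $d$, a number determined only by $(m,n)$ through $\mathcal{P}$. This simultaneously shows the ML degree is well defined and independent of the choice of $\mathcal{L}$.

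The dual statement follows from the identical argument after replacing the quadratic condition $KSK-K\in\mathcal{L}^\perp$ by the linear condition $K-W\in\mathcal{L}^\perp$ from (\ref{eq:opt_system2}) and parameterizing by $W=S^{-1}$ over the open locus of invertible matrices; irreducibility of $\mathcal{P}$ and the slice argument are unaffected, so the dual ML degree is likewise a function of $(m,n)$ alone.

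The main obstacle is verifying the hypothesis that the generic fiber of $\pi$ is finite, since otherwise the fiber cardinality is not a meaningful invariant; equivalently, one must check $\dim V=\dim\mathcal{P}$, i.e.\ that the score equations have isolated solutions for generic data. I would handle this by the upper semicontinuity of fiber dimension: it suffices to exhibit a single pair $(\mathcal{L}_0,S_0)$ with a nonempty finite solution set, whence the minimal, and therefore generic, fiber dimension is zero. Such a pair is furnished by any explicit computation producing finitely many critical points, for instance the generic values recorded in Table~\ref{table:ml-generic}, while the existence of solutions for generic data is ensured because $\mathbb S^n_+\cap\mathcal{L}$ is nonempty and the likelihood attains a critical point on this cone.
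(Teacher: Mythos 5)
Your argument is the paper's own proof made explicit: the authors simply invoke ``general results in algebraic geometry'' (computing over the rational function field in the coordinates of $\mathcal{L}$ and $S$, with constancy off a discriminant hypersurface), and your incidence variety over $\mathrm{Gr}(m,N)\times\S^n$ with generic flatness and the slice argument is precisely that mechanism spelled out. One small caution: your closing claim that solutions exist for generic data because $\mathbb S^n_+\cap\mathcal{L}$ is nonempty is not needed for the statement and is not actually true for a generic $\mathcal{L}$ (a generic low-dimensional subspace of $\S^n$ can miss the positive definite cone entirely); the ML degree counts complex critical points, so the witness instance from the computation already suffices and the real-analytic appeal should be dropped.
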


\begin{proof} The independence rests on general results in
algebraic geometry, to the effect that the computation can be
done over the rational function field where the coordinates of $\mathcal{L}$
and $S$ are unknowns. The ML degree will be the same for all specializations
to $\mathbb{R}$ that remain outside a certain discriminant hypersurface.
Table~\ref{table:ml-generic} and further values are computed 
rapidly using the software described in Section \ref{sec5}.
\end{proof}

The dual ML degree was already studied by
 Sturmfels and Uhler in \cite[Section 2]{SU}.
Our table on the right  is in fact found in their paper.
The symmetry along its columns is proved in \cite[Theorem 2.3]{SU}.
It states that the dual ML degree for dimension $m$ coincides with 
the dual ML degree for codimension $m-1$. This is derived from the
equations  (\ref{eq:opt_system2})  by an appropriate homogenization.
Namely,  the middle equation is clearly symmetric under switching
the role of $K$ and $\Sigma$, and
the linear equations on the left and on the right
in (\ref{eq:opt_system2}) can also be interchanged under this switch.

\begin{table}[]
\begin{tabular}{@{}c| ccccc  @{}}
	\toprule
	{\multirow{2}{*}{m}} & \multicolumn{5}{c}{n} \\
	  & 2 & 3 & 4 & 5 & 6 \\ \midrule
	2 & 1 & 3 & 5 & 7 & 9 \\
	3 & 1 & 7 & 19 & 37 & 61  \\
	4 &   & 7 & 45 & 135 & 299 \\
	5 &   & 3 & 71 & 361 & 1121 \\
	6 &   & 1 & 81 & 753 & 3395 \\
	7 &   &   & 63 & 1245 & 8513 \\
	8 &   &   & 29 & 1625 & 17867 \\
	9 &   &   & 7 & 1661 & 31601 \\
	10 &   &   & 1 & 1323 & 47343 \\
	11 &   &   &  & 801 & 60177 \\
	12 &   &   &  & 347 & 64731 \\
	13 &   &   &  & 97 & 58561 \\
	14 &   &   &  & 15 & 44131 \\
	15 &   &   &  & 1 & 27329 \\
	16 &   &   &  &   & 13627 \\
	17 &   &   &  &   & 5341 \\
	18 &   &   &  &   & 1511 \\
	19 &   &   &  &   & 289 \\
	20 &   &   &  &   & 31 \\
	21 &   &   &  &   & 1 \\
	\bottomrule
\end{tabular} \qquad  \qquad \qquad\begin{tabular}{@{}c| ccccc @{}}
	\toprule
	{\multirow{2}{*}{m}} & \multicolumn{5}{c}{n} \\
	  & 2 & 3 & 4 & 5 & 6 \\ \midrule
	2 & 1 & 2 & 3 & 4 & 5 \\
	3 & 1 & 4 & 9 & 16 & 25 \\
	4 &   & 4 & 17 & 44 & 90 \\
	5 &   & 2 & 21 & 86 & 240 \\
	6 &   & 1 & 21 & 137 & 528 \\
	7 &   &   & 17 & 188 & 1016 \\
	8 &   &   & 9 & 212 & 1696  \\
	9 &   &   & 3 & 188 & 2396 \\
	10 &   &   & 1 & 137 & 2886 \\
	11 &   &   &  & 86 & 3054 \\
	12 &   &   &  & 44 & 2886 \\
	13 &   &   &  & 16 & 2396 \\
	14 &   &   &  & 4 & 1696 \\
	15 &   &   &  & 1 &  1016 \\
	16 &   &   &  &   & 528 \\
	17 &   &   &  &   & 240 \\
	18 &   &   &  &   & 90 \\
	19 &   &   &  &   & 25 \\
	20 &   &   &  &   & 5 \\
	21 &   &   &  &   & 1 \\
	\bottomrule
\end{tabular} 
\medskip
\caption{ML degrees and dual ML degrees for generic models}\label{table:ml-generic}
\end{table}

It was conjectured in \cite[Section 2]{SU} that, for fixed $m$, the dual ML degree is
a polynomial of degree $m-1$ in the matrix size $n$. This is easy to see for $m \leq 3$.
The polynomials for $m=4$ and $m=5$ were also derived in \cite[Section 2]{SU}.

The situation is similar but more complicated for the ML degree. First of all,
the symmetry along columns no longer holds as seen on the left in Table~\ref{table:ml-generic}.
This is explained by the fact that the linear equation $K - W \in \mathcal{L}^\perp$ is now
replaced by the quadratic equation $KSK-K \in \mathcal{L}^\perp$. However,
the polynomiality along the rows of Table \ref{table:ml-generic} seems to persist.
For $m=2$ the ML degree equals $2n-3$, as shown recently by
Coons, Marigliano and Ruddy~\cite{CMR}.
For $m \geq 3$ we propose the following conjecture.

\begin{conj} The ML degree of a linear covariance model of dimension $m$
is a polynomial of degree $m-1$ in the ambient dimension $n$.
For $m=3$ this ML degree equals $\,3n^2-9n+7$,
and for $m=4$ it equals $\,11/3n^3-18n^2+85/3n-15$.
\end{conj}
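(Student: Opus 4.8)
The plan is to split the conjecture into a structural part and a finite computation, and to concentrate on the former. Suppose we have established that, for fixed $m$, the ML degree is a polynomial in $n$ of degree at most $m-1$. A polynomial of degree $\le m-1$ is determined by its values at any $m$ points, and the table already supplies these: for $m=3$ the entries at $n=2,\dots,6$ (namely $1,7,19,37,61$) over-determine the quadratic $3n^2-9n+7$, and for $m=4$ the four entries at $n=3,\dots,6$ (namely $7,45,135,299$) exactly determine the cubic $\frac{11}{3}n^3-18n^2+\frac{85}{3}n-15$. To turn this into a proof it remains only to certify those finitely many integer values exactly, which one can do by a Gr\"obner-basis computation over $\Q$ for the ideal \eqref{eq:opt_system} at a random rational $S$, thereby upgrading the numerical counts of Table~\ref{table:ml-generic} to theorems for the individual pairs $(m,n)$. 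The leading coefficients $3$ and $\tfrac{11}{3}$ are nonzero, so ``degree at most $m-1$'' then yields ``degree exactly $m-1$''. Thus the mathematical content is the uniform-in-$n$ degree bound.

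To obtain that bound I would first present the critical equations as a square system on affine $m$-space. Choosing a basis $A_1,\dots,A_m$ of $\mathcal{L}$ and writing $\Sigma=\sum_i t_i A_i$, the identity $KSK-K=K(S-\Sigma)K$ (which uses $K\Sigma=I$) turns the conditions of \eqref{eq:opt_system} into the $m$ equations
\[
{\rm tr}\!\big({\rm adj}(\Sigma)\,(S-\Sigma)\,{\rm adj}(\Sigma)\,A_j\big)=0,\qquad j=1,\dots,m,
\]
obtained after clearing the denominator $(\det\Sigma)^2$; each has degree $2n-1$ in $t=(t_1,\dots,t_m)$. The B\'ezout number of this system is $(2n-1)^m$, a polynomial of degree $m$ in $n$. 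The ML degree is this B\'ezout number minus the contribution of all spurious solutions, which are of two kinds: those lying on the hypersurface $\{\det\Sigma=0\}$, where $K$ degenerates and the clearing introduced parasitic zeros, and those lying at infinity in $\mathbb{P}^m$. The whole problem is to prove that these excess contributions are polynomial in $n$ and that their leading terms cancel the summand $(2n)^m=2^m n^m$, thereby dropping the degree from $m$ to $m-1$. The phenomenon is already forced for $m=2$, where $(2n-1)^2=4n^2-4n+1$ must collapse to the value $2n-3$ of Coons--Marigliano--Ruddy~\cite{CMR}.

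A guiding template is the dual ML degree. Its defining equations \eqref{eq:opt_system2} are linear, so after clearing denominators one gets $m$ equations of degree $n$ and a B\'ezout bound $n^m$; Sturmfels and Uhler~\cite{SU} conjectured, and verified for small $m$, that the true dual ML degree is a polynomial of degree $m-1$, so exactly the same ``drop by one'' must occur in the strictly simpler linear setting. I would therefore first carry out the excess-intersection analysis on the dual side, where the geometry of $\{\det\Sigma=0\}$ and of the points at infinity is cleaner, and then transport the resulting bookkeeping to the quadratic primal equations. In both cases the key is an $n$-independent stratification of the spurious locus: on $\{\det\Sigma=0\}$ the adjugate ${\rm adj}(\Sigma)$ drops rank, so the solutions organize according to the corank of $\Sigma$, and one hopes that the contribution of each corank stratum, computed via a Segre-class (excess-intersection) formula, is a polynomial in $n$ with a controllable leading term.

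The hardest part, and the likely obstacle, is exactly this excess-intersection accounting carried out uniformly in $n$. Because the model $\mathcal{L}$ is in general singular along $\{\det\Sigma=0\}$ and the $m$ equations do not form a transverse complete intersection there, a naive Chern-class computation will overcount; one must instead resolve the parasitic components, for instance on a blow-up of $\mathbb{P}^m$ along the determinantal strata, and show that the same resolution and the same multiplicity pattern persist for every $n$, with all numerical invariants depending polynomially on $n$. Establishing that polynomial dependence, and pinning down that the top-degree contribution of the excess equals $2^m n^m$, is where essentially all the difficulty resides; once it is in hand, the degree bound and hence the whole conjecture follow by the interpolation of the first paragraph.
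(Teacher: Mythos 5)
This statement is a \emph{conjecture} in the paper: the authors offer no proof, only the numerical evidence of Table~\ref{table:ml-generic} together with the remark that the case $m=2$ (ML degree $2n-3$) was recently proved by Coons--Marigliano--Ruddy \cite{CMR}. So there is no ``paper proof'' to compare against, and the relevant question is whether your attempt actually closes the conjecture. It does not. Your reduction of the problem is fine as far as it goes: if one knew that for fixed $m$ the ML degree were a polynomial in $n$ of degree at most $m-1$, then interpolation through the (Gr\"obner-certifiable) table entries would indeed pin down the stated quadratic and cubic, and the nonvanishing leading coefficients would give exact degree $m-1$. Your square-system setup is also correct: using $KSK-K=K(S-\Sigma)K$ and clearing $(\det\Sigma)^2$ does yield $m$ equations of degree $2n-1$ in the coordinates of $\mathcal{L}$, with B\'ezout number $(2n-1)^m$, and the true count must therefore arise by subtracting excess contributions supported on $\{\det\Sigma=0\}$ and at infinity.

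The gap is that the entire mathematical content --- the uniform-in-$n$ degree bound --- is exactly the part you do not supply, as you yourself acknowledge in your final paragraph. Nothing in the proposal computes, bounds, or even stratifies the excess intersection; the claim that the corank strata of $\{\det\Sigma=0\}$ contribute polynomially in $n$ with leading term cancelling $2^m n^m$ is asserted as a hope, not established. Moreover, the ``guiding template'' you propose to prove first --- polynomiality of the dual ML degree --- is itself still only a conjecture of Sturmfels and Uhler \cite{SU} (verified for $m\le 5$), so the strategy rests on an unproved base case of the same flavor. The proposal is a sensible research program consistent with how one would attack the problem (and with how \cite{CMR} handle $m=2$), but as a proof it establishes nothing beyond what the paper already records, namely the finitely many table values and the interpolating polynomials they suggest.
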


We now come to {\em diagonal linear covariance models}. For these models,
 $\mathcal{L}$ is a linear subspace of dimension $m$ inside
the space $\R^n$ of diagonal $n \times n$-matrices.
We wish to determine the ML degree and dual ML degree
when $\mathcal{L}$ is generic in $\R^n$.

In the diagonal case, the score equations simplify as follows.
Both the covariance matrix and the concentration matrix are diagonal.
We eliminate the entries of $\Sigma$ by setting
$K = {\rm diag}(k_1,\ldots,k_n)$ and
$\Sigma = {\rm diag}(k_1^{-1},\ldots,k_n^{-1})$.
We also write $s_1,\ldots,s_n$ for the diagonal
entries of the sample covariance matrix $S$, and
$w_i = s_i^{-1}$ for their reciprocals.
Finally, let $\mathcal{L}^{-1}$ denote the {\em reciprocal linear space} of $\mathcal{L}$,
i.e.~the variety obtained as  the closure of the set of
   coordinatewise reciprocals of vectors in $\mathcal{L} \cap (\mathbb{R}^*)^n$.

\begin{table}[h]
\begin{tabular}{@{}c| ccccc  @{}}
	\toprule
	{\multirow{2}{*}{m}} & \multicolumn{5}{c}{n} \\
	  & 3 & 4 & 5 & 6 & 7 \\ \midrule
	2 & 3 & 5 & 7 & 9 & 11 \\
	3 & 1 & 7 & 17 & 31 & 49  \\
	4 &   & 1 &  15 & 49 & 111 \\
	5 &   &    &    1 &  31 & 129 \\
	6 &   &   &       &      1 & 63 \\
	7 &   &   &       &         & 1 \\
		\bottomrule
\end{tabular} \qquad \qquad \qquad\begin{tabular}{@{}c| ccccc @{}}
	\toprule
	{\multirow{2}{*}{m}} & \multicolumn{5}{c}{n} \\
	  & 3 & 4 & 5 & 6 & 7 \\ \midrule
	2 & 2 & 3 & 4 & 5 & 6 \\
	3 & 1 & 3 & 6 & 10 & 15 \\
	4 &   & 1 &  4  & 10 & 21 \\
	5 &   &    &  1 & 5 & 21\\	
	6 &   &    &     &  1 & 15 \\
	7 &   &   &       &         & 1 \\
	\bottomrule
\end{tabular} 
\medskip
\caption{ML degrees and dual ML degrees for generic diagonal models}\label{table:ml-generic-diag}
\end{table}

\begin{prop}
Let $\mathcal{L} \subset \mathbb{R}^n$ be a linear space,
viewed as a Gaussian covariance model of diagonal matrices.
The score equations for the likelihood in (\ref{eq:opt_system})  
and the dual likelihood  in (\ref{eq:opt_system2})  can be written as systems of
$n$ equations in $n$ unknowns as follows:
\begin{itemize}
\item[(\ref{eq:opt_system}')] \qquad
$ (k_1,\ldots,k_n) \in \mathcal{L}^{-1}$ \ and  \
$(
s_1 k_1^2-k_1,
s_2 k_2^2-k_2,
\ldots,
s_n k_n^2-k_n ) \in \mathcal{L}^\perp, $
\item[(\ref{eq:opt_system2}')] \qquad
$(k_1,\ldots,k_n) \in \mathcal{L}^{-1}$ \  and  \
$(k_1-w_1,k_2-w_2,\ldots,k_n-w_n) \in \mathcal{L}^\perp $.
\end{itemize}
The number of complex solutions to  (\ref{eq:opt_system2}')
for generic $\mathcal{L}$ of dimension $m$ equals
$\binom{n-1}{m-1}$.
\end{prop}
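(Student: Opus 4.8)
The plan is to recognize the solution set of the dual diagonal system (\ref{eq:opt_system2}') as a transversal intersection of two varieties of complementary dimension. Writing $\sigma_i = 1/k_i$, the first condition $(k_1,\dots,k_n)\in\mathcal{L}^{-1}$ says exactly that $\Sigma=\mathrm{diag}(\sigma_1,\dots,\sigma_n)$ lies in $\mathcal{L}$, while the second condition $(k_i-w_i)_i\in\mathcal{L}^\perp$ is the single affine-linear constraint $k\in w+\mathcal{L}^\perp$. Thus the complex solutions are precisely the points of
$$\mathcal{L}^{-1}\,\cap\,(w+\mathcal{L}^\perp),$$
where the reciprocal linear space $\mathcal{L}^{-1}$ has dimension $m$ and $w+\mathcal{L}^\perp$ has the complementary dimension $n-m$. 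So for generic data the count should be the degree of $\mathcal{L}^{-1}$.

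First I would establish the transversality and the count. Coordinatewise reciprocation is a birational involution of the torus $(\mathbb{C}^*)^n$, so $\dim\mathcal{L}^{-1}=m$. Since $w=(w_1,\dots,w_n)$, being the reciprocals of the diagonal of a generic $S$, is generic, the affine space $w+\mathcal{L}^\perp$ is a generic translate of $\mathcal{L}^\perp$; by a Bertini-type genericity argument it meets $\mathcal{L}^{-1}$ transversally in exactly $\deg(\mathcal{L}^{-1})$ reduced points. One technical point must be checked, namely that for generic $w$ no intersection point escapes to the boundary $\mathcal{L}^{-1}\setminus\{(x_1^{-1},\dots,x_n^{-1}):x\in\mathcal{L}\cap(\mathbb{C}^*)^n\}$ added in forming the closure; this follows from a dimension count, since that boundary has dimension less than $m$ and hence is missed by a generic codimension-$m$ affine slice. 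Consequently all solutions lie in the torus, every $k_i$ is nonzero, and $\sigma=x\in\mathcal{L}$, so the solution count equals $\deg(\mathcal{L}^{-1})$.

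The main work is then the purely combinatorial computation of $\deg(\mathcal{L}^{-1})$ for generic $\mathcal{L}$. Because $\mathcal{L}$ is generic, the associated matroid is the uniform matroid $U_{m,n}$, and I would invoke the description of the reciprocal linear space of a matroid due to Proudfoot and Speyer: a Gröbner degeneration identifies the initial ideal with the Stanley--Reisner ideal of the broken-circuit complex, whose facets are the no-broken-circuit bases, so $\deg(\mathcal{L}^{-1})$ equals the number of such bases. For $U_{m,n}$ with the natural order on $\{1,\dots,n\}$ the circuits are the $(m+1)$-subsets, so an $m$-subset $B$ is a broken circuit precisely when $1\notin B$; the no-broken-circuit bases are therefore the $m$-subsets containing the element $1$, of which there are $\binom{n-1}{m-1}$. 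This yields $\deg(\mathcal{L}^{-1})=\binom{n-1}{m-1}$ and completes the count.

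I expect this last step to be the crux. If one prefers to avoid matroid machinery, the degree of $\mathcal{L}^{-1}$ can instead be computed directly as the degree of the image of a generic $\mathbb{P}^{m-1}\subset\mathbb{P}^{n-1}$ under the standard Cremona involution $[x]\mapsto[\,x_1^{-1}:\cdots:x_n^{-1}\,]$, by pulling back $n-m$ general hyperplanes and subtracting the contribution of the coordinate base locus; this recovers the same binomial, and one can cross-check the small entries against Table~\ref{table:ml-generic-diag}. Either route reduces the proposition to this combinatorial degree computation, since all of the statistical content is absorbed into the geometric reformulation of the first step.
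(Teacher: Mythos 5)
Your argument is correct and follows essentially the same route as the paper: identify the solutions of (\ref{eq:opt_system2}') with a general affine-linear section of the reciprocal linear space $\mathcal{L}^{-1}$ and then compute its degree via Proudfoot--Speyer. Your count of no-broken-circuit bases of the uniform matroid $U_{m,n}$ is the same quantity the paper calls the M\"obius invariant (by the broken-circuit theorem), so the two computations of $\binom{n-1}{m-1}$ agree; your explicit treatment of transversality and of the boundary of $\mathcal{L}^{-1}$ is a welcome addition of detail.
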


\begin{proof}
The translation of 
(\ref{eq:opt_system}) and (\ref{eq:opt_system2})
to (\ref{eq:opt_system}') and (\ref{eq:opt_system2}') is straightforward.
The equations (\ref{eq:opt_system2}') represent a general linear section
of the reciprocal linear space $\mathcal{L}^{-1}$.
 Proudfoot and Speyer showed that the degree
of $\mathcal{L}^{-1}$ equals the M\"obius invariant of the underlying matroid.
We refer to \cite{KV} for a recent study and many references. This M\"obius invariant
equals $\binom{n-1}{m-1}$ in the generic case, when the matroid is uniform.
\end{proof}

It would be desirable to express
the number of complex solutions to  (\ref{eq:opt_system}')
as a matroid invariant, and thereby explain the
entries on the left side of Table \ref{table:ml-generic-diag}.
As before, the $m$th row gives the values of a polynomial
of degree $m-1$. For instance, for $m=3$ we find
$\,2n^2 - 8n + 7$, and for $m=4$ we find $\, 4/3 n^3 - 10 n^2 + 68/3 n - 15$.

\section{Numerical Nonlinear Algebra}\label{sec5}

Linear algebra is the foundation of  scientific computing and applied mathematics.
\emph{Nonlinear algebra} \cite{nonlinear-algebra} is a generalization where linear systems are replaced
by nonlinear equations and inequalities.
At the heart of this lies algebraic geometry, but there are links to many other branches, such as combinatorics, algebraic topology, commutative algebra, convex and discrete geometry, tensors and multilinear algebra, number theory and representation theory.
Nonlinear algebra is not simply a rebranding of algebraic geometry.
It highlights that the focus is on computation and applications, and the theoretical needs that this requires, results in a new perspective.

We refer to \emph{numerical nonlinear algebra} as the branch of nonlinear algebra which is concerned with the efficient numerical solution of polynomial equations and inequalities. In the existing literature, this is 
referred to as numerical algebraic geometry. In the following we discuss the numerical solution of
polynomial equations, and we describe the techniques used for deriving the computational results in this paper. 

One of our main contributions is the
{\tt Julia} package {\tt LinearCovarianceModels.jl}
for estimating linear covariance models; see (\ref{eq:uurrll}).
Given $\mathcal{L}$, our package computes the ML degree,
and the dual ML degree. For any $S$, it finds
all critical points and it selects those that are local maxima. 
The following example explains how this is done.

 \begin{exmp}
  We use the package to verify Example \ref{ex:toeplitz33primal}:\\[2mm]
\indent\texttt{\julia  using LinearCovarianceModels} \\
\indent\texttt{\julia $\Sigma$ $=$ toeplitz(3)}\\
\indent\texttt{3-dimensional LCModel:} \\
\indent\;$\theta_1\; \theta_2\; \theta_3$ \\
\indent\;$\theta_2\; \theta_1\; \theta_2$ \\
\indent\;$\theta_3\; \theta_2\; \theta_1$ \\[2mm]
\noindent We compute the ML degree of the family $\Sigma$ by computing all solutions for a generic instance.
The pair of solutions and generic instance is called an {\em ML degree witness}.\\[2mm]
\indent\texttt{\julia W = ml\_degree\_witness($\Sigma$)} \\
\indent\texttt{MLDegreeWitness:} \\
\indent\texttt{\;$\circ$ ML degree $\rightarrow$ 3} \\
\indent\texttt{\;$\circ$ model dimension $\rightarrow$ 3} \\
\indent\texttt{\;$\circ$ dual $\rightarrow$ false} \\[2mm]
\noindent By default, the computation of the ML degree witness relies on a heuristic stopping criterion.
We can numerically verify the correctness by using a trace test \cite{leykin2018trace}.\\[2mm]
\indent\texttt{\julia verify(W)} \\
\indent\texttt{Compute additional witnesses for completeness...} \\
\indent\texttt{Found 10 additional witnesses} \\
\indent\texttt{Compute trace...} \\
\indent\texttt{Norm of trace: 2.6521474798326718e-12} \\
\indent\texttt{\bf true} \\[2mm]
\noindent
We now input the specific sample covariance matrix in \eqref{eq:saschamatrix}, and
we compute all critical points of this MLE problem using the ML degree witness 
from the previous step. \\[2mm]
\indent\texttt{\julia S = [4/5 -9/5 -1/25; -9/5 79/16 25/24; -1/25 25/24 17/16]}; \\
\indent\texttt{\julia critical\_points(W, S)} \\
\indent\texttt{3-element Array\{Tuple\{Array\{Float64,1\},Float64,Symbol\},1\}:} \\
\indent\texttt{\;([2.39038, -0.286009, 0.949965], -5.421751313919751, :local\_maximum)} \\ 
\indent\texttt{\;([2.52783, -0.215929, -1.45229], -5.346601549034418, :global\_maximum)} \\
\indent\texttt{\;([2.28596, -0.256394, 0.422321], -5.424161999175718, :saddle\_point)} \\[2mm]
\noindent If only the global maximum is of interest
then this can also be computed directly.\\[2mm]
\indent\texttt{\julia mle(W, S)} \\
\indent\texttt{3-element Array\{Float64,1\}:} \\
\indent\texttt{\;2.527832268219689 } \\
\indent\texttt{\;-0.21592947057775033} \\
\indent\texttt{\;-1.4522862659134732}
\end{exmp}
\noindent By default only positive definite solutions are reported. To list all critical points we run the command with an additional option. \\[2mm]
\indent\texttt{\julia critical\_points(W, S, only\_positive\_definite=false)} \\[2mm]
In this case, since the ML degree is 3, we are not getting more solutions. 
\bigskip

In the rest of this section we explain the mathematics behind our software,
and how it applies to our MLE problems. A textbook introduction 
to the~numerical solution of polynomial systems by homotopy continuation methods
is \cite{SW}.

Suppose we are given $m$
 polynomials $f_1,\ldots,f_m$ in $n$ unknowns $x_1,\ldots,x_n$ 
 with complex coefficients, where $m \ge n$. We are interested in computing all isolated 
 complex solutions of the system $f_1(x)=\cdots=f_m(x)=0$.
These solutions comprise the zero-dimensional components of the variety $V(F)$ 
where $F=(f_1,\ldots,f_m)$. 

The general idea of homotopy continuation is as follows.
Assume we have another system $G=(g_1,\ldots,g_m)$ of polynomials for which we know some or all of its solutions.
Suppose there is a \emph{homotopy} $H(x,t)$ with $H(x,0) = G(x)$, $H(x,1)=F(x)$ 
with the property that,
for every $x^* \in V(G)$, there exists a smooth path $x: [0,1) \rightarrow \mathbb{C}^n$ with $x(0) = x^*$
and $H(x(t),t)=0$ for all $t \in [0,1)$.
Then we can track each point in $ V(G)$
to a point in  $V(F)$. This is done by solving the {\em Davidenko differential equation}
$$\frac{\partial H}{\partial x}\bigl(x(t),t\bigr) \cdot \dot{x}(t) \,+\, \frac{\partial H}{\partial t}\bigl(x(t),t \bigr) \,\,= \,\, 0 $$
with initial condition $x(0) = x^*$. Using a \emph{predictor-corrector} scheme for  numerical path tracking, both the local and global error can be controlled.
Methods called \emph{endgames} are used 
to handle divergent paths and singular solutions \cite[Chapter~10]{SW}.

Here is a general framework for start systems and homotopies. 
Embed $F$ in a family of polynomial systems $\mathcal{F}_Q$, continuously
parameterized by a convex open set $Q \subset \mathbb{C}^k$. 
We have $F=F_q \in \mathcal{F}_Q$ for some $q \in Q$.
Outside a Zariski closed set $\Delta \subset Q$, every system in $\mathcal{F}_Q$ 
has the same number of solutions. If  $p\in Q \backslash \Delta$ then
 $F_p$ is such a {\em generic instance} of the family $\mathcal{F}_Q$,
and the following is a suitable homotopy \cite{Morgan:Sommese:87}:
\begin{equation}\label{eq:parameter_homotopy}
	H(x,t) \,=\, F_{(1-t)p+tq}(x) \,.
\end{equation}

Now, to compute $V(F_q)$, it suffices to find all solutions of a generic instance $F_p$ 
and then track these along the homotopy  \eqref{eq:parameter_homotopy}.
Obtaining all solutions of a generic instance can be a challenge, but this has to be done \emph{only once}!
That is the \emph{offline phase}.
 Tracking from a generic to a specific instance of interest is the \emph{online phase}.

A key point in applying this method is the choice of the family $\mathcal{F}_Q$.
For MLE problems in statistics, it is natural to choose $Q$
as the space of data or instances. In our scenario,
$Q$ is $\mathbb{S}^n$, or a complex version thereof.
  We shall discuss this below.

First, we explain the {\em monodromy method} for an 
arbitrary family $\mathcal{F}_Q$.
Suppose the general instance has $d$  solutions, and that we are given one
 \emph{start pair} $(x_0, p_0)$. This means that $x_0$ is a solution to the instance $F_{p_0}$.
 Consider the incidence variety
$$Y \,:=\, \bigl\{(x,p) \in \mathbb{C}^n \times Q \; | \; F_p(x) = 0\bigr\}. $$
Let $\pi$ be the projection from $\mathbb C^n\times Q$ onto the second factor. 
For $q \in Q\backslash \Delta$, the fiber
$\pi^{-1}(q)$ has exactly $d$ points.
A loop in $Q \backslash \Delta$ based at $q$ has $d$ lifts to $Y$.
 Associating a point in the fiber to the endpoint of the corresponding lift gives a permutation in $S_d$. This defines an action of the fundamental group of $Q\backslash \Delta$ on the fiber $\pi^{-1}(q)$. The \emph{monodromy group} of our family is the image of the fundamental group in $S_d$.
 
The monodromy method fills the fiber $\pi^{-1}(p_0)$ by exploiting the monodromy group. For this,
 the start solution $x_0$ is numerically tracked along a loop in $Q\backslash \Delta$,
  yielding a solution $x_1$ at the end. If $x_1\not= x_0$, then $x_1$ is also tracked along the \emph{same} loop,
   possibly yielding again a new solution. This is done until no more solutions are found. 
   Then, all solutions are tracked along a new loop, where the process is repeated. This process 
is stopped by use of a \emph{trace test}.
For a detailed description of the monodromy method and the trace test 
see \cite{delCampo:Rodriguez:2017,leykin2018trace}.
To get this off the ground, one needs a start pair $(x_0,p_0)$. This can often be found by
inverting the problem. Instead of finding a solution $x_0$ to a given $p_0$, we start with $x_0$ and look for $p_0$ such that $F_{p_0}(x_0)=0$.

We now explain how this works for the score equations \eqref{eq:opt_system} of our MLE problem.
First pick a random matrix $\Sigma_0$ in the subspace $\mathcal{L}$.
We next compute $K_0$ by inverting $\Sigma_0$.
Finally we need to find a symmetric matrix $S_0$ such that
$\,K_0S_0K_0-K_0\in \mathcal L^\perp$. Note that this is a linear system
of equations and hence directly solvable. In this manner, we easily find
a start pair $(x_0,p_0)$ by setting $p_0 = S_0$  and $x_0 = (\Sigma_0,K_0)$.

The number $d$ of solutions to a generic instance is the ML degree of our model.
A priori knowledge of $d$ is useful because it serves as a stopping
criterion in the monodromy method. This is one reason for 
focusing on the ML degree in this paper.

\section{Local Maxima versus Rational MLE}\label{sec6}

The theme of this paper is maximum likelihood inference for  linear covariance models.
We developed some numerical nonlinear algebra for this problem, and
we offer a software package (\ref{eq:uurrll}). From the applications perspective,
this is motivated by the fact that the likelihood function is non-convex. It can have
multiple local maxima. A concrete instance for $3 \times 3$ Toeplitz matrices was shown in
Example \ref{ex:toeplitz33primal}.

In this section we undertake a more systematic experimental study of local maxima.
Our aim is to answer the following question:
there is the theoretical possibility that
$\ell(\Sigma)$ has  many local maxima, but can we also observe this in practice?

To address this question, we explored a range of linear covariance models $\mathcal{L}$.
For each model, we conducted the following experiment. We repeatedly
generated sample covariance matrices $S \in \mathbb S^n_+$.
This was done as follows. We first sample
 a matrix $X \in \R^{n\times n}$ by picking each entry independently from a normal distribution 
 with mean zero and variance one. And, then we set $S:=XX^T/n$. 
  This is equivalent~to sampling $nS\in \mathbb S^n_+$ from the standard Wishart distribution
  with $n$ degrees of freedom.
 
\begin{table}[h]
	\small
	\begin{tabular}{@{}l| ccccc ccccc ccc @{}}
		\toprule
		   & \multicolumn{13}{c}{m} \\
		  & 2 & 3 & 4 & 5 & 6 & 7 & 8 & 9 & 10 & 11 & 12 & 13 & 14 \\ \midrule
		ML degree    & 7 & 37 & 135 & 361 & 753 & 1245 & 1625 & 1661 & 1323 & 801 & 347 & 97 & 15 \\ \midrule
		max          & 2 & 3 & 3 & 5 & 5 & 5 & 5 & 6 & 7 & 5 & 4 & 2 & 1 \\
		max pd      & 1 & 2 & 3 & 3 & 4 & 4 & 4 & 4 & 5 & 5 & 4 & 2 & 1 \\ \midrule
		multiple     & 0.4\% & 5.8 & 13.8 & 31.2 & 37.2 & 39.0 & 40.6 & 37.4 & 32.0 & 20.4 & 13.8 & 3.0 & 0.0 \\
		multiple pd & 0.0\% & 4.6 & 11.2 & 22.4 & 25.2 & 31.6 & 33.0 & 34.8 & 29.6 & 19.4 & 13.0 & 3.0 & 0.0 \\
			\bottomrule   
			\smallskip
	\end{tabular}  
\caption{
Experiments for generic $m$-dimensional linear subspaces of $\mathbb S^5$. }\label{tab:exp1}
	\end{table}

For each of the generated sample covariance matrices $S$, we computed 
the real solutions of the likelihood
equations (\ref{eq:opt_system}). From these, we identified the set of
all local maxima in $\mathbb S^n$,
and we extracted its subset of local maxima in the positive definite cone
${\mathbb S}_+^n$. We recorded the numbers of these local maxima.
 Moreover, we kept track of the fraction of instances $S$ for which there were
    multiple (positive define) local maxima.
In Table \ref{tab:exp1} we present our results for $n=5$ and generic linear subspaces
$\mathcal{L}$.

For each $m$ between $2$ and $14$,
we selected five generic linear subspaces $\mathcal{L}$ in the $15$-dimensional space $\mathbb S^5$.
Each linear subspace $\mathcal{L}$ was constructed by choosing a basis of positive definite matrices.
   The basis elements were constructed with the same sampling method as the sample covariance matrices.
The ML degree of this linear covariance model is the corresponding entry in the $n=5$ column on 
the left in Table \ref{table:ml-generic}.
These degrees are repeated in the 
row named {\em ML degree} in Table \ref{tab:exp1}.

For each model $\mathcal{L}$, we generated $100$ sample covariance matrices $S$,
and we solved the likelihood equations (\ref{eq:opt_system}) 
using our software \ {\tt LinearCovarianceModels.jl}.
The row \emph{max} denotes the largest number of local maxima 
that was observed in these $100$ experiments. The row
\emph{multiple} gives the fraction of instances which resulted in 
two or more local maxima. These two numbers pertain to local maxima in $\mathbb{S}^5$.
The rows \emph{max pd} and \emph{multiple pd} are the analogues 
restricted to the positive definite cone $\mathbb{S}^5_+$.

For an illustration, let us discuss the models of dimension $m=7$. These equations
(\ref{eq:opt_system}) 
have $1245$ complex solutions, but the number of real solutions is much smaller.
Nevertheless, in two fifths of the instances ($39.0 \%$) there were two or more
local maxima in $\mathbb{S}^5$. In one third of the instances  ($ 31.6 \%$)
the same happened $\mathbb{S}^5_+$. The latter is
  the case of interest in statistics.
  One instance had four local maxima in $\mathbb{S}^5_+$.

\smallskip

The second experiment we report concerns a combinatorially defined class
of linear covariance models, namely the Brownian motion tree models  in (\ref{eq:treedef}).
We consider eleven combinatorial types of trees with $5$ leaves. For each model we perform the
 experiment described above, but we now
used $500$ sample covariance matrices per model. Our results are presented in Table \ref{tab:ex2}, in the same format
as in Table~\ref{tab:exp1}.

\begin{table}[h]
	\begin{tabular}{@{}l| ccccc ccccc c @{}}
		\toprule
		   & \multicolumn{11}{c}{tree number} \\
		  & 1 & 2 & 3 & 4 & 5 & 6 & 7 & 8 & 9 & 10 & 11  \\ \midrule
		ML degree    & 37 & 37 & 81 & 31 & 27 & 31 & 31 & 27 & 13 & 17 & 17 \\ \midrule
		max          & 3 & 3 & 4 & 3 & 3 & 3 & 4 & 3 & 3 & 3 & 3 \\
		max pd      & 3 & 2 & 3 & 3 & 3 & 3 & 2 & 2 & 3 & 3 & 3 \\ \midrule
		multiple     & 21.2\% & 22.8 & 24.2 & 15.6 & 23.0 & 21.2 & 21.2 & 15.4 & 13.8 & 16.2 & 12.4 \\
	   multiple pd   & 8.2\%  & 9.4  & 14.0 & 10.0 & 15.8 & 13.0 & 12.2 & 8.8  & 13.8 & 16.2 & 12.4 \\
			\bottomrule
	\end{tabular} 
	\medskip
	\caption{Experiments for eleven Brownian motion tree models with $5$ leaves.	} \label{tab:ex2}
\end{table}

The eleven trees are numbered by the order in which they appear in Table \ref{tab:BMT}.
For instance, tree 1 gives the $7$-dimensional model in $\mathbb{S}^5_+$ 
whose covariance matrices are

$$ \begin{small}
\Sigma \,\,\,=\,\,\,
\begin{bmatrix} 
\gamma_1 & \gamma_6 & \gamma_6 & \gamma_6 & \gamma_7 \\
\gamma_6 & \gamma_2 & \gamma_6 & \gamma_6 & \gamma_7 \\
\gamma_6 & \gamma_6 & \gamma_3 & \gamma_6 & \gamma_7 \\
\gamma_6 & \gamma_6 & \gamma_6 & \gamma_4 & \gamma_7 \\
\gamma_7 & \gamma_7 & \gamma_7 & \gamma_7 & \gamma_5 \\
\end{bmatrix}.
\end{small}
$$
This model has ML degree $37$. Around eight percent of the instances
led to multiple maxima among positive definite matrices.
Up to three such maxima were observed. 


\smallskip

The results reported in  Tables~\ref{tab:exp1} and  \ref{tab:ex2}
show that the maximal number of local maxima increases with the ML degree.
But, they do not increase as fast as one would expect from the growth of the ML degree. On the other hand, the frequency of observing multiple local maxima seems to be closer correlated to the ML degree.

Here is an interesting observation to be made in Table~\ref{tab:ex2}.
The last three trees, labeled 9, 10 and 11, are the binary trees.
These have the maximum dimension  $2n-2$.
For these models, every local maximum in
$\mathbb{S}^n$ is also in the positive definite cone $\mathbb{S}^n_+$.
We also verified this for all binary trees with $n=6$ leaves. This is interesting since the positive definiteness constraint is the hardest to respect in an optimization routine.
It is tempting to conjecture that this persists for all binary trees with $n \geq 7$.

There is another striking observation in
Table \ref{tab:BMT}. The dual ML degree for binary trees is always equal to one.
We shall prove in Theorem \ref{thm:bmtdual}
  that this holds for any~$n$.
  This means that the dual MLE can be expressed as a rational function
in the data $S$. Hence there is only one local maximum,
which is therefore the global maximum.

\smallskip

We close this section with a few remarks on the important 
special case when the ML degree or the dual ML degree is equal to one.
This holds if and only if each entry of the estimated matrix
$\hat \Sigma$ or $\check \Sigma$ is a rational function in the 
$\binom{n+1}{2}$ quantities $s_{ij}$.

Rationality of the MLE has received a lot of attention in the case of {\em discrete random variables}.
See \cite[Section 7.1]{Sul} for a textbook reference.
If the MLE of a discrete model is rational then its coordinates
are alternating products of linear forms in the data \cite[Theorem 7.3.4]{Sul}.
This result due to Huh was refined in \cite[Theorem 1]{DMS}.
At present we have no idea what the analogue in the Gaussian case might look like.

\begin{prob} \label{prob:OrlandosProblem}
Characterize all Gaussian models whose MLE is a rational function.
\end{prob}

In addition to the binary trees in Theorem \ref{thm:bmtdual},
statisticians are familiar with a number of 
situations when the dual MLE is rational.
The dual MLE is the MLE of a linear concentration model with the sample covariance matrix 
$S$ replaced by its inverse $W$. This is  studied in \cite{SU}
and in many other sources on Gaussian graphical models and exponential families.
The following result paraphrases \cite[Theorem 4.3]{SU}.

\begin{prop}\label{prop:decomposable}
If a linear covariance model $\mathcal L$ is given by zero restrictions on $\Sigma$, then 
the dual ML degree is equal to one if and only if the associated graph is chordal.
\end{prop}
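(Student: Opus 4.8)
The plan is to reduce the statement, via Proposition~\ref{prop:dualMLE}, to the known characterization of chordality for Gaussian \emph{concentration} graph models. By Proposition~\ref{prop:dualMLE} the dual ML degree counts the complex solutions $(\Sigma,K)$ of $\Sigma\in\cL$, $K\Sigma=I_n$, $K-W\in\cL^\perp$, where $W=S^{-1}$. When $\cL$ is the covariance graph model of a graph $G$, so that $\cL=\{\Sigma: \sigma_{ij}=0\ \text{for}\ ij\notin E(G)\}$, its orthogonal complement $\cL^\perp$ is the space of symmetric matrices supported off the diagonal on the non-edges of $G$. Hence the three conditions read: $\sigma_{ij}=0$ on non-edges, $K=\Sigma^{-1}$, and $k_{ii}=w_{ii}$ together with $k_{ij}=w_{ij}$ on the edges of $G$. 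First I would observe, exactly as in the proof of Proposition~\ref{prop:dualMLE}, that renaming $(\Sigma,\cL,W)$ as $(K',\cK',S')$ turns this into the score system of the Gaussian concentration graph model on $G$ with the inverse sample covariance $W$ in the role of the data: maximize $\log\det K'-{\rm tr}(WK')$ over concentration matrices $K'$ that vanish on the non-edges of $G$. Thus the dual ML degree of the covariance graph model on $G$ equals the ML degree of the concentration graph model on $G$, and it suffices to show the latter equals $1$ if and only if $G$ is chordal. This is precisely \cite[Theorem 4.3]{SU}, which the proposition paraphrases.

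For the direction ``$G$ chordal $\Rightarrow$ dual ML degree $1$'' I would use the classical clique--separator formula for decomposable models. If $G$ is chordal with maximal cliques $\mathcal C$ and minimal separators $\mathcal S$ (counted with their junction-tree multiplicities), then the score system above has the unique solution
$$ \check\Sigma \;=\; \sum_{C\in\mathcal C}\bigl[(W_C)^{-1}\bigr]^{0}\;-\;\sum_{T\in\mathcal S}\bigl[(W_T)^{-1}\bigr]^{0}, $$
where $W_C$ is the principal submatrix of $W$ indexed by $C$ and $[\,M\,]^{0}$ denotes the $n\times n$ matrix obtained from $M$ by zero-padding the remaining rows and columns. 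One checks that this $\check\Sigma$ is supported on the diagonal and the edges of $G$, hence lies in $\cL$, and that its inverse $(\check\Sigma)^{-1}$ agrees with $W$ on every clique, hence on the diagonal and on all edges; so it solves the system. Since $\check\Sigma$ is an explicit rational function of $W$, and hence of $S$, the criterion recorded above --- that the dual ML degree is one exactly when $\check\Sigma$ is rational in the data --- gives dual ML degree $1$.

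For the converse ``dual ML degree $1\Rightarrow G$ chordal'' I would argue by contrapositive and localize to a forbidden subgraph. If $G$ is not chordal it contains a chordless cycle $C_k$ with $k\ge 4$ as an induced subgraph. Restricting the model to the vertices of this cycle and specializing the remaining data, the ML degree of the concentration model on $G$ is bounded below by that of the concentration model on $C_k$, so it is enough to prove that the $k$-cycle model has ML degree strictly greater than $1$. For this I would translate to the reciprocal variety: the MLE of a linear concentration model inverts the map sending $K'\in\cL$ to the diagonal and edge entries of $(K')^{-1}$, and dual ML degree $1$ would force this map to have a rational inverse. I would then exhibit non-invertibility for the cycle, either by the explicit algebraic computation that $C_4$ already has ML degree $>1$ together with a monotonicity argument in $k$, or by appealing to the computation of the degree of the reciprocal variety of the cycle matroid in \cite{SU}.

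The main obstacle is this converse direction. The reduction to an induced chordless cycle is the clean combinatorial step, but genuinely certifying that the cycle forces more than one \emph{complex} critical point --- rather than merely more than one real or positive-definite one --- is where the work lies; the cleanest route is to quote \cite[Theorem 4.3]{SU} for this implication, while giving the clique--separator formula above as a self-contained proof of the chordal case.
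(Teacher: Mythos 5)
Your proposal is correct and follows the same route as the paper, which offers no independent proof but simply observes that the dual MLE problem for a covariance graph model is the MLE problem for the corresponding Gaussian concentration graph model with data $W=S^{-1}$ and then cites \cite[Theorem 4.3]{SU}. Your reduction of the system in Proposition~\ref{prop:dualMLE} to that concentration model is exactly the intended argument, and your added clique--separator formula for the chordal direction is a correct (standard, Lauritzen-style) supplement that goes beyond what the paper records.
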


It would be interesting to extend this result to  other combinatorial
families, such as colored covariance graph models (cf.~\cite{hojsgaard2008graphical}),
including structured Toeplitz matrices.

The following example illustrates Problem
\ref{prob:OrlandosProblem} and it raises some further questions.

\begin{exmp} \label{ex:onetwo}
We present a linear covariance model such that 
both the MLE and the dual MLE are rational functions.
Fix $n \geq 2$ and let $\mathcal{L}$ be the hyperplane
with equation $\sigma_{12} = 0$.
By Proposition~\ref{prop:decomposable}, the dual ML degree of $\mathcal{L}$ is one.
The model is dual to 
the decomposable undirected graphical model with missing edge $\{1,2\}$.

Following \cite{lauritzen, SU}, we obtain the
rational formula for its dual MLE:
\begin{equation} \label{eq:sigma12a}
\check k_{12}=W_{1,R}W_{R,R}^{-1}W_{R,2},\qquad 
{\rm and} \qquad \check k_{ij}= w_{ij}\quad\mbox{for }(i,j)\neq (1,2).
\end{equation}
Here $R = \{3,\ldots,n\}$ and $W_{\bullet,\bullet}$ 
is our notation for  submatrices of $W = (w_{ij}) = S^{-1}$.

The ML degree of the model $\mathcal{L}$ is also one. 
To see this, we note that
$\mathcal{L}$ is the DAG model with edges $i\to j$ whenever $i<j$ unless $(i,j)=(1,2)$.
By \cite[Section 5.4.1]{lauritzen}, the MLE of any Gaussian DAG model is rational.
In our case, we find $\, \hat K\;=\; W+A $,
where $A$ is the $n\times n$ matrix which is zero apart from the
upper left $2 \times 2$ block
$$
A_{12,12}\,\,\,=\,\,\,\begin{bmatrix}
	s_{11}^{-1} & 0\\
	0 & s_{22}^{-1}
	\end{bmatrix}\,\,-\,\,\frac{1}{s_{11}s_{22}-s_{12}^2}\begin{bmatrix}
	\phantom{-}s_{22} & -s_{12}\\
	-s_{12} & \phantom{-}s_{11}
\end{bmatrix}.
$$
The entries in $\check \Sigma = (\check K)^{-1}$ and $\hat \Sigma = (\hat K)^{-1}$ 
are rational functions in the data $s_{ij}$. But, 
unlike in the discrete case of \cite{DMS}, here the rational functions
are not products of linear forms. Problem \ref{prob:OrlandosProblem}
asks for an understanding of its irreducible factors.
\end{exmp}

Example \ref{ex:onetwo} raises many questions.
First of all, can we characterize all linear spaces $\mathcal{L}$ with rational formulas
for their MLE, or their dual MLE, or both of them?
Second, it would be interesting to study arbitrary models
$\mathcal{L}$ that are hyperplanes. Consider the entries for
$m = \binom{n+1}{2}-1$ in Tables \ref{table:ml-generic}  and \ref{tab:exp1}.
We know from \cite[Section 2.2]{SU} that
the dual ML degree equals $n-1$.
The ML degree seems to be $2^{n-1}-1$. In all cases there seems to be only one local
(and hence global) maximum.
How to prove these observations?
Finally, it is worthwhile to study the MLE when
$\mathcal{L}^{\perp}$ is a generic symmetric matrix of rank $r$.
What is the ML degree in terms of $r $ and $n$?

\section{Brownian Motion Tree Models}\label{sec7}

We now study
the linear space $\mathcal{L}_T$ associated with a rooted tree $T$
with~$n$ leaves. The equations of $\mathcal{L}_T$ are
$\sigma_{ij}=\sigma_{kl}$ whenever ${\rm lca}(i,j) = {\rm lca}(k,l)$.
 In the literature (cf.~\cite{felsenstein_maximum-likelihood_1973,
sturmfels2019brownian}) one assumes that the parameters $\theta_A$
in (\ref{eq:treedef}) are nonnegative. Here, we relax this hypothesis: we allow
all covariance matrices in the spectrahedron $\mathcal{L}_T \cap \mathbb{S}^n_+$.

The ML degree and its dual do not depend on how the leaves of a tree are labeled but only on the tree topology. For fixed $n$ each tree topology is uniquely identified by the set of clades.
 Since the root clade $\{1,\ldots,n\}$ and the leaf-clades $\{1\}$, \ldots,$\{n\}$ are part of every tree, they are omitted in our notation. For example, if $n=5$ then the tree $\{\{1,2\},\{3,4\},\{3,4,5\}\}$ is the binary tree with four inner vertices corresponding to the three non-trivial clades mentioned explicitly. This tree is depicted in Figure~\ref{fig:treesymm}.

We computed the ML degree and the dual ML degree of $\mathcal{L}_T$ for many trees $T$.
In Table~\ref{tab:BMT} we report results for five and six leaves. We notice that
the dual ML degree is exactly one for all binary trees.
  This suggests that the dual MLE is a rational function. Our
  main result in this section (Theorem~\ref{thm:bmtdual}) says that this is indeed true.

{\small
\begin{table}[]
\begin{tabular}{l|l|c|c}
n & Clades                                                                & ML degree & dual ML degree\\  \midrule
5 & \{1, 2, 3, 4\}                                             & 37  &  11     \\
5 & \{1, 2\}                                                & 37  &    11   \\
5 & \{1, 2, 3\}                                           & 81   & 16     \\
5 & \{1, 2\}, \{3, 4, 5\}                                     & 31 &  4      \\
5 & \{1, 2\}, \{3, 4\}                                        & 27 & 4       \\
5 & \{1, 2, 3\}, \{1, 2, 3, 4\}                               & 31 & 4       \\
5 & \{1, 2\}, \{1, 2, 3\}                                     & 31 & 4       \\
5 & \{1, 2\}, \{1, 2, 3, 4\}                                  & 27  & 4      \\
5 & \textbf{\{1, 2\}, \{3, 4\}, \{1, 2, 3, 4\}  }                      & 13 &  1      \\
5 & \textbf{\{1, 2\}, \{3, 4\}, \{1, 2, 5\} }                          & 17 & 1       \\
5 & \textbf{\{1, 2\}, \{1, 2, 3\}, \{1, 2, 3, 4\} }                   & 17 & 1       \\
\hline
6 & \{1, 2, 3, 4, 5\}                                          & 95  &  26     \\
6 & \{1, 2\}                                                  & 95  & 26      \\
6 & \{1, 2, 3, 4\}                                             & 259 &  44     \\
6 & \{1, 2, 3\}                                                & 259 & 44      \\
6 & \{1, 2, 3\}, \{4, 5, 6\}                                   & 221 &  16     \\
6 & \{1, 2\}, \{3, 4, 5, 6\}                                   & 101 &  11     \\
6 & \{1, 2, 3, 4\}, \{1, 2, 3, 4, 5\}                          & 101 &  11     \\
6 & \{1, 2\}, \{3, 4\}                                         & 81  &  11     \\
6 & \{1, 2\}, \{1, 2, 3\}                                      & 101 &  11     \\
6 & \{1, 2\}, \{3, 4, 5\}                                      & 181 & 16      \\
6 & \{1, 2\}, \{1, 2, 3, 4, 5\}                                & 81 &  11      \\
6 & \{1, 2, 3\}, \{1, 2, 3, 4\}                                & 221 & 16      \\
6 & \{1, 2, 3\}, \{1, 2, 3, 4, 5\}                             & 181 & 16      \\
6 & \{1, 2\}, \{1, 2, 3, 4\}                                   & 181 & 16      \\
6 & \{1, 2\}, \{3, 4\}, \{5, 6\}                               & 63 &  4      \\
6 & \{1, 2\}, \{3, 4\}, \{1, 2, 3, 4\}                         & 99 &  4      \\
6 & \{1, 2\}, \{1, 2, 3\}, \{4, 5, 6\}                         & 115 & 4      \\
6 & \{1, 2\}, \{3, 4, 5\}, \{3, 4, 5, 6\}                      & 115 & 4      \\
6 & \{1, 2\}, \{3, 4, 5\}, \{1, 2, 3, 4, 5\}                   & 99 &  4      \\
6 & \{1, 2\}, \{3, 4\}, \{1, 2, 5, 6\}                         & 83 & 4       \\
6 & \{1, 2\}, \{3, 4\}, \{1, 2, 3, 4, 5\}                      & 63 &  4      \\
6 & \{1, 2, 3\}, \{1, 2, 3, 4\}, \{1, 2, 3, 4, 5\}            & 115 & 4      \\
6 & \{1, 2\}, \{3, 4\}, \{1, 2, 5\}                           & 83  & 4      \\
6 & \{1, 2\}, \{1, 2, 3\}, \{1, 2, 3, 4\}                     & 115 & 4      \\
6 & \{1, 2\}, \{1, 2, 3\}, \{1, 2, 3, 4, 5\}                  & 83  & 4      \\
6 & \{1, 2\}, \{1, 2, 3, 4\}, \{1, 2, 3, 4, 5\}               & 83  & 4      \\
6 & \textbf{\{1, 2\}, \{3, 4\}, \{5, 6\}, \{1, 2, 3, 4\}  }             & 53  & 1      \\
6 & \textbf{\{1, 2\}, \{3, 4\}, \{1, 2, 5\}, \{3, 4, 6\}  }             & 61  & 1      \\
6 & \textbf{\{1, 2\}, \{3, 4\}, \{1, 2, 3, 4\}, \{1, 2, 3, 4, 5\} }    & 53 &  1      \\
6 &\textbf{\{1, 2\}, \{3, 4\}, \{1, 2, 5\}, \{1, 2, 5, 6\}  }        & 61  & 1      \\
6 & \textbf{\{1, 2\}, \{3, 4\}, \{1, 2, 5\}, \{1, 2, 3, 4, 5\} }      & 53  & 1      \\
6 & \textbf{\{1, 2\}, \{1, 2, 3\}, \{1, 2, 3, 4\}, \{1, 2, 3, 4, 5\} } & 61  & 1 \smallskip \\
\end{tabular}
\caption{ML degrees and dual ML degrees for 
Brownian motion tree models with five and six leaves.
 Binary trees are highlighted.}\label{tab:BMT}
\end{table}}

The equations (\ref{eq:opt_system2})
 for the dual ML degree can be written as $\,e_A^T (K-W) e_A=0$ for all clades $A$.
 Here $ W = (w_{ij})$ is given and $K^{-1} \in \mathcal{L}_T$ is unknown.
 We abbreviate
 \begin{equation}
 \label{eq:wnotation}
w_{A,B}\;=\;\sum_{i\in A}\sum_{j\in B} w_{ij}\;=\;e_A^T W e_B.
\end{equation}
The same notation is used for general matrices.
We present two examples with $n=4$.

\begin{exmp}
Consider the tree with clades $\{1,2\}$, $\{3,4\}$, shown in \cite[Figure 1]{sturmfels2019brownian}.
The dual MLE $\check K$ satisfies $\check k_{ii}= w_{ii}$ for $i=1,2,3,4$, 
and $\check k_{12}=w_{12}$, $\check k_{34}=w_{34}$, and
$$
\check k_{ij}\,=\,w_{12,34} \frac{w_{i,12}w_{j,34}}{w_{12,12}w_{34,34}} \qquad
\hbox{for $i\in \{1,2\}$, $j\in \{3,4\}$}.
$$
\end{exmp}

\begin{exmp}\label{ex:cater4}
The tree with clades $\{1,2\}$, $\{1,2,3\}$ has $\check k_{ii}=w_{ii}$, $\,\check k_{12}=w_{12}$, and
$$
\check k_{13}\,=\,w_{12,3}\frac{w_{1,12}}{w_{12,12}},\;\;
\check k_{14}\,=\,w_{123,4}\frac{w_{1,12}w_{12,123}}{w_{12,12}w_{123,123}},\;\;
\check k_{23}\,=\,w_{12,3}\frac{w_{2,12}}{w_{12,12}},
$$
$$
\check k_{24}\,=\,w_{123,4}\frac{w_{2,12}w_{12,123}}{w_{12,12}w_{123,123}},\;\;
\check k_{34}\,=\,w_{123,4}\frac{w_{123,3}}{w_{123,123}}.
$$
\end{exmp}
\noindent Both examples were computed in {\tt Mathematica} using the description of the Brownian motion
tree model in terms of the inverse covariance matrix  given in \cite{sturmfels2019brownian}.

Recall that for $v \in V$ we write $\de(v)$ for the set of leaves of $T$ that are descendants of $v$. The following theorem generalizes formulas in the above two examples.  It is our main result in
Section~\ref{sec7}.

\begin{thm}\label{thm:bmtdual}
Consider the model $\mathcal{L}_T$ given by
	a rooted binary tree $T$ with $n$ leaves.
	The dual MLE $\check K = (\check k_{ij})$ satisfies 
 $\check k_{ii}=w_{i,i}$ for all $i$, and its off-diagonal entries are
	\begin{equation} \label{eq:dualBMT} \qquad
	\check k_{ij}\;=\;w_{A,B}\prod_{u\to v}\frac{w_{\de(v),\de(u)}}{w_{\de(u),\de(u)}}
	\qquad \hbox{for $\,1\leq i<j\leq n$} .
	\end{equation}
	Here $A,B$ are the clades of the two children of $\,{\rm lca}(i,j)$.
The product is over all edges $u\to v$ of $\,T$,
 except for the two edges with $u = {\rm lca}(i,j)$,
on the path from $i$ to $j$ in~${T}$.
\end{thm}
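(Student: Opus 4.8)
The plan is to establish the rational formula in \eqref{eq:dualBMT} by verifying that the proposed matrix $\check K$ solves the dual score equations \eqref{eq:opt_system2}, which for the tree model read $e_A^T(K-W)e_A = 0$ for every clade $A \in \mathcal{C}$. Since Proposition~\ref{prop:dualMLE} together with Proposition~\ref{prop:dualstat} and the convexity of $\ell^\vee$ guarantees a unique critical point, and Table~\ref{tab:BMT} records that the dual ML degree is one for binary trees, it suffices to exhibit \emph{one} $K$ whose inverse lies in $\mathcal{L}_T$ and which satisfies the clade equations; uniqueness then forces it to be the dual MLE. Concretely, I would first check the constraint $\check K^{-1} \in \mathcal{L}_T$, and then verify the $n-1$ nontrivial clade equations $w_{A,A} = \check k_{A,A}$ (the leaf-clades give the diagonal condition $\check k_{ii}=w_{ii}$, which holds by definition).

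The key technical step is to use the explicit description of $\mathcal{L}_T$ in terms of the \emph{inverse} covariance matrix, as provided in \cite{sturmfels2019brownian}. For a Brownian motion tree model, $K = \Sigma^{-1}$ has a sparse, tree-structured form: its entries factor along the tree according to the least common ancestor structure, and in particular $k_{ij}$ depends on the path from $i$ to $j$. First I would recall this inverse parametrization and observe that the right-hand side of \eqref{eq:dualBMT} has exactly the multiplicative path-structure that characterizes membership of $\check K^{-1}$ in $\mathcal{L}_T$; this reduces the constraint $\check K^{-1}\in\mathcal{L}_T$ to a combinatorial identity about products over tree edges. The factor $w_{A,B}$ in front, where $A,B$ are the clades of the two children of ${\rm lca}(i,j)$, is precisely the contribution of the apex of the path, while the product $\prod_{u\to v} w_{\de(v),\de(u)}/w_{\de(u),\de(u)}$ telescopes along the two descending legs of the path from $i$ to $j$.

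I expect the main obstacle to be the clade equations $w_{A,A}=\check k_{A,A}$, i.e.\ showing $\sum_{i,j\in A}\check k_{ij} = \sum_{i,j\in A} w_{ij}$ for each internal clade $A$. The natural approach is induction on the tree, processing clades from the leaves toward the root. For a clade $A$ with children $A_1, A_2$, one splits the double sum $\check k_{A,A}$ into the within-child sums $\check k_{A_1,A_1}$, $\check k_{A_2,A_2}$ (handled by the inductive hypothesis) and the cross term $\check k_{A_1,A_2}$; substituting \eqref{eq:dualBMT} into the cross term and using the telescoping of the edge-products should collapse the sum to $w_{A,A}$. The delicate bookkeeping is tracking how the excluded edges (the two edges emanating from ${\rm lca}(i,j)$) interact as ${\rm lca}(i,j)$ varies across pairs inside $A$, so that the ratios $w_{\de(v),\de(u)}/w_{\de(u),\de(u)}$ assemble correctly; I would verify the base cases against Examples stated above for $n=4$ to fix the indexing conventions, then push the telescoping identity through in general.
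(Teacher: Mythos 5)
Your plan inverts the paper's argument. The paper uses the toric parametrization $p_{0i}=1/t_i$, $p_{ij}=t_{{\rm lca}(i,j)}/(t_it_j)$ of \cite[Theorem~1.2]{sturmfels2019brownian} to \emph{solve} the clade equations $k_{A,A}=w_{A,A}$, determining each $t_u$ and each $p_{0,A}$ by a root-to-leaf induction (identity (\ref{eq:p0A})); this simultaneously shows that the score equations force the formula (hence have at most one solution) and produces the closed form. You instead propose to \emph{verify} that the stated formula is a solution, by a leaf-to-root induction on the clade sums. The combinatorial core is the same telescoping identity either way: your cross-term reduction amounts to showing $\sum_{i\in A}\prod_{u\to v}w_{\de(v),\de(u)}/w_{\de(u),\de(u)}=1$ for each clade $A$ (product over the edges from the vertex with clade $A$ down to the leaf $i$), which does hold by induction on the tree since the children's blocks sum to $w_{A,A}$; and your step (a), membership of $\check K^{-1}$ in $\mathcal{L}_T$, forces you to compute the row sums $p_{0i}=\sum_j\check k_{ij}$ and establish essentially (\ref{eq:p0A}) anyway. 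So the verification route is workable at the combinatorial level but yields no real savings over the derivation.

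The genuine gap is in your endgame. Having verified that $\check K$ solves (\ref{eq:opt_system2}), you conclude it is the dual MLE by appealing to uniqueness, citing strict concavity of $\ell^\vee$ and Table~\ref{tab:BMT}. Neither suffices as stated. Concavity gives uniqueness of critical points only \emph{inside} $\mathcal{L}_T\cap\mathbb{S}^n_+$; the algebraic system (\ref{eq:opt_system2}) can in principle have solutions outside the positive definite cone, and nothing in your argument shows that the matrix defined by (\ref{eq:dualBMT}) is positive definite --- the block sums $w_{A,B}$ in the formula need not be positive even for $W\succ 0$, so this is not visible from the expression. Table~\ref{tab:BMT} records the dual ML degree only for $n=5,6$ and is computational evidence; the statement ``dual ML degree one for all binary trees'' is a corollary of the theorem, not an admissible input to its proof. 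The paper closes exactly this gap because its derivation shows the equations have \emph{at most one} solution; combined with attainment of the maximum of $\ell^\vee$ on the relatively closed set $\mathcal{L}_T\cap\mathbb{S}^n_+$ (which supplies at least one positive definite critical point), the unique algebraic solution must be that maximizer. To repair your version you must either prove this at-most-one-solution statement (which is precisely the derivation you set out to avoid) or prove directly that $\check K\succ 0$ whenever $W\succ 0$; as written, your argument only shows that the formula gives \emph{a} critical point of the complexified score equations, not \emph{the} dual MLE.
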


\begin{proof}
We first rewrite (\ref{eq:opt_system2}) in terms of the coordinates $p_{ij}=-k_{ij}$ for $1\leq i<j\leq n$ and $p_{0i}=\sum_{j=1}^n k_{ij}$ for $1\leq i\leq n$. See \cite[Example 1.1]{sturmfels2019brownian} for an example of this coordinate change. The condition $K-W\in {{\mathcal L}_T}^\perp$ 
means that $k_{A,A}=w_{A,A}$ for every clade $A$ of $T$. This can be rewritten in the new coordinates as follows: 
\begin{itemize}
\item[(i)] $\sum_{j\neq i}p_{ij}=w_{i,i}$ for all $1\leq i\leq n$, and
\item[(ii)] $p_{A,B}=-w_{A,B}$ for all inner vertices $u$ of $T$, 
where $A|B$ is the partition of ${\rm clade}(u) = A \cup B$ given
by the two children of $u$.
\end{itemize}
Now fix $u$ with clade partition $A|B$ as above, so
$u={\rm lca}(i,j)$ for all $i \in A$, $j \in B$.
The parametrization $p_{0i}=1/t_i$ and  $p_{ij}=t_{{\rm lca}(i,j)}/(t_i t_j)$ 
 in \cite[Theorem~1.2]{sturmfels2019brownian} yields
$$ \begin{matrix}
p_{A,B}\,\,\,=\,\,\,t_{u} \cdot \sum_{i\in A}\frac{1}{t_i} \cdot \sum_{j\in B}\frac{1}{t_j}
\,\,\,=\,\,\, t_u \cdot p_{0,A} \cdot p_{0,B}. \end{matrix} $$
Using the equations in (ii), we obtain
\begin{equation}\label{eq:aux2}
	t_u \;=\;-\frac{w_{A,B}}{p_{0,A}p_{0,B}}\qquad\mbox{and hence }\qquad p_{ij}=-w_{A,B}\frac{p_{0i}}{p_{0,A}}\frac{p_{0j}}{p_{0,B}}.
\end{equation} 

We claim that the following identity holds for any clade $A\subseteq [n]$:
\begin{equation}\label{eq:p0A}
p_{0,A}\;=\;w_{[n],[n]}\prod_{u\to v}\frac{w_{\de(v),\de(u)}}{w_{\de(u),\de(u)}},	
\end{equation}
where the product is over all edges $u\to v$ of $T$ in the path from the root to the node
with clade $A$. Note that (\ref{eq:aux2}) and (\ref{eq:p0A}) imply
the formula (\ref{eq:dualBMT}) and so the theorem.

We now prove (\ref{eq:p0A}). Since $p_{0,[n]}=w_{[n],[n]}$, the claim holds for $A=[n]$.
Fix a clade $A\subset [n]$ and assume 
(\ref{eq:p0A})  for all clades $A_1\subset\cdots\subset A_k\subset A_{k+1}=[n]$ strictly containing $A_0=A$. For each $i=0,\ldots,k$ denote
$$
\alpha_i\;:=\;w_{A_{k+1},A_{k+1}}\frac{w_{A_k,A_{k+1}}}{w_{A_{k+1},A_{k+1}}}\cdots \frac{w_{A_i,A_{i+1}}}{w_{A_{i+1},A_{i+1}}}.
$$
By the induction hypothesis $p_{0,A_i}=\alpha_i$ for all $i=1,\ldots,k$.
Our goal is to prove that $p_{0,A}=\alpha_0$.
The clades $A_1\backslash A,\ldots,A_{k+1}\backslash A_k$ form a partition of $\bar A=[n]\backslash A$. 
We~have 
\begin{equation}\label{eq:p0A_identity}
	\begin{array}{rcl}
		p_{0,A}&=&w_{A,A}+k_{A,\bar A}\,\,= \,\,w_{AA}-p_{A,A_1\backslash A}-p_{A,A_2\backslash A_1}-\cdots-p_{A,A_{k+1}\backslash A_k} \\
		&=& w_{A,A}+w_{A,A_1\backslash A}+\sum_{i=1}^k w_{A_i,A_{i+1}\backslash A_i}\frac{p_{0A}}{p_{0,A_i}}.
	\end{array}
\end{equation}
Here the last equality follows because for every $i\in A$ and every
$j\in A_{l+1}\backslash A_{l}$ the vertex $u={\rm lca}(i,j)$ is the same. The clades of the children of $u$ are $A_{l} $ 
and $A_{l+1}\backslash A_{l}$.
Therefore, using (\ref{eq:aux2}),
 we get $p_{A,A_{l+1}\backslash A_l}=w_{A_l,A_{l+1}\backslash A_l}\frac{p_{0,A}}{p_{0,A_l}}$. 
We rewrite (\ref{eq:p0A_identity}) as
\begin{equation}\label{eq:aux}
w_{A,A_1}\;=\;p_{0,A}\left(1-\sum_{i=1}^k\frac{w_{A_i,A_{i+1}}-w_{A_i,A_i}}{\alpha_i} \right).	
\end{equation}
To simplify the bracketed expression, note that $\frac{w_{A_i,A_{i+1}}}{\alpha_i}=\frac{w_{A_{i+1},A_{i+1}}}{\alpha_{i+1}}$ and so, 
$$
1-\sum_{i=1}^k\frac{w_{A_i,A_{i+1}}-w_{A_i,A_i}}{\alpha_i}\;=\;1+\frac{w_{A_1,A_1}}{\alpha_1}-\frac{w_{A_k,A_{k+1}}}{\alpha_k}\;=\;\frac{w_{A_1,A_1}}{\alpha_1}.
$$
Plugging this back to (\ref{eq:aux}) gives
$$
p_{0,A}\;=\;\alpha_1\frac{w_{A,A_1}}{w_{A_1,A_1}}\;=\;\alpha_0.
$$
This proves the correctness of (\ref{eq:p0A}).

We have shown that  (\ref{eq:opt_system2}) implies
the rational formula (\ref{eq:dualBMT}) for $\check K$ in terms of $W$.
To argue that this is the MLE, we need that $W \in \mathbb{S}^n_+$ implies
$\check K \in \mathbb{S}^n_+$. For this, we use an analytic argument. Since
$W$ is positive definite, the dual likelihood function has a unique maximum $K=W$ over the 
whole cone $\mathbb S^n_+$. The model $\mathcal L_T\cap \mathbb S^n_+$ is a relatively closed subset of $\mathbb S^n_+$ and so the dual likelihood restricted to this set attains its maximum. The ML degree is equal to one and so there is at most one optimum in $\mathcal L_T$. We conclude there is exactly one optimum in $\mathcal L_T\cap \mathbb S^n_+$ and it is equal to $\check K$. \end{proof}

\begin{figure}
	\includegraphics[scale=.48]{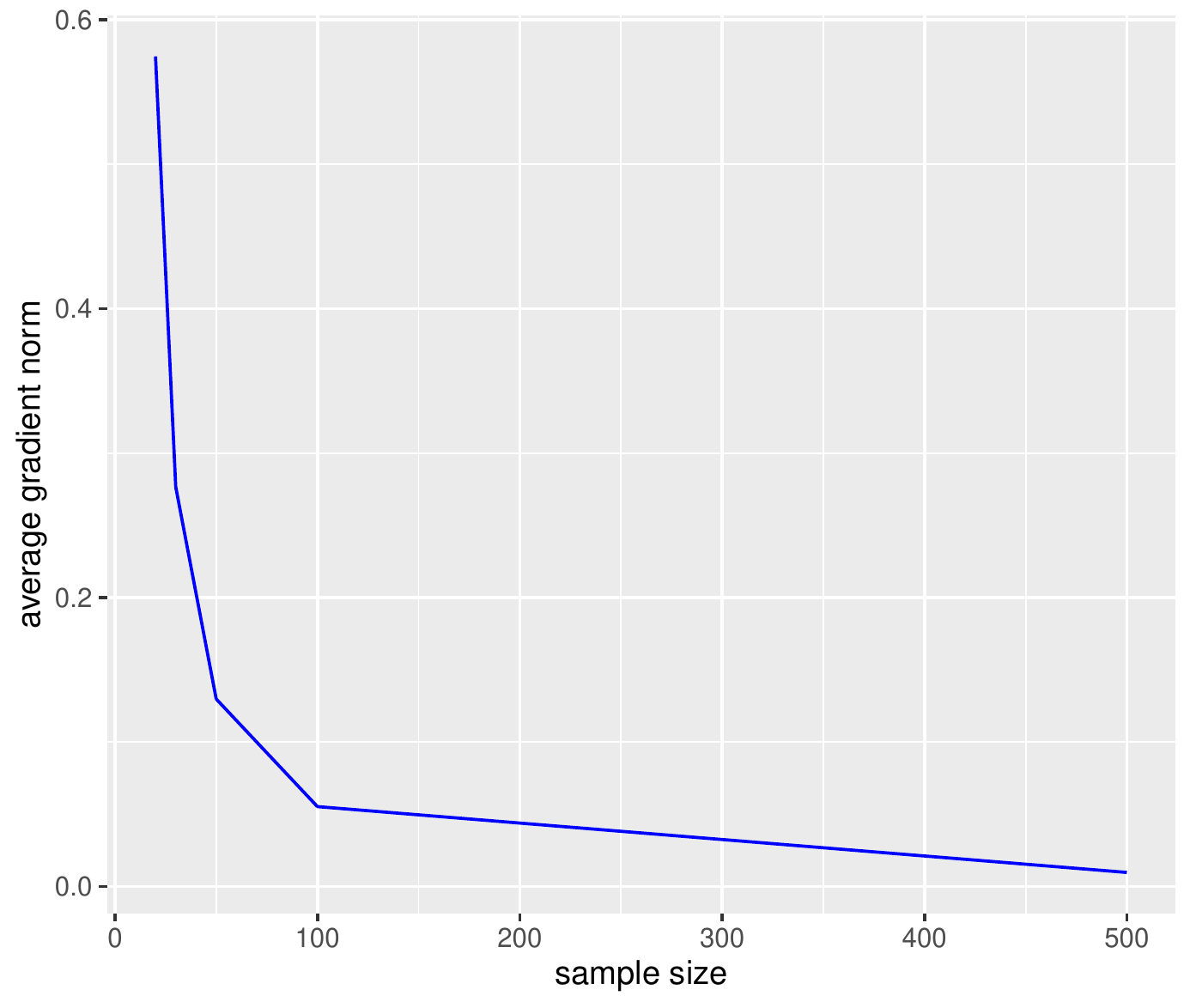} \quad \includegraphics[scale=.48]{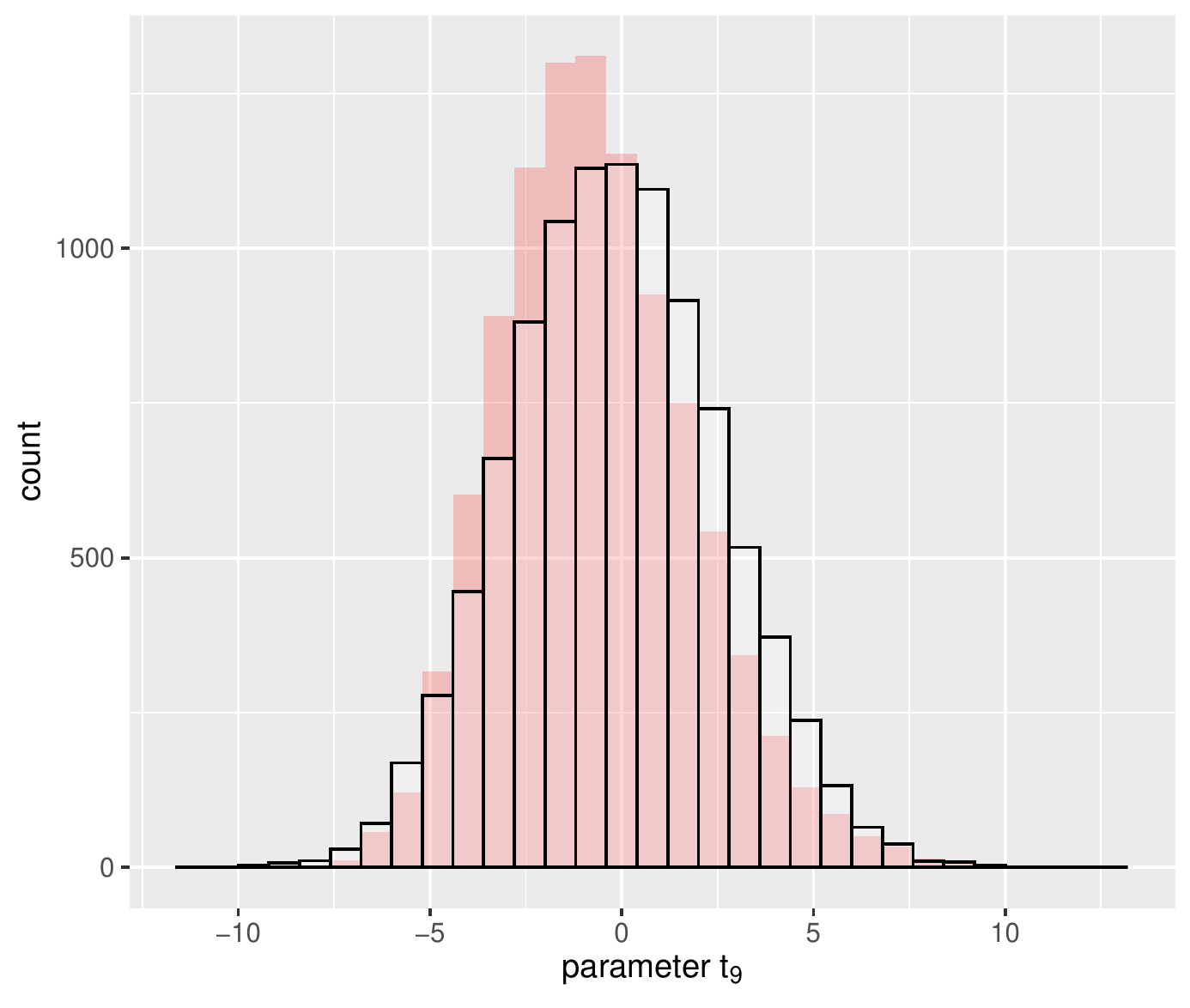}
	\caption{Left: The average value of the norm of the gradient of the likelihood function evaluated at the dual MLE for the sample size from 15 to 500. Right: Histogram of the estimated values of $\sqrt{N}(\check t_9-t_9)$ over 10000 iterations and for sample sizes 20 (red) and 1000 (white).}\label{fig:meangradient}
\end{figure}

Proposition~\ref{prop:dualstat} states that the dual MLE $\check K$ is asymptotically normal with the asymptotic variance as small as possible. The following experiment illustrates this.

\begin{exmp} Fix the five-leaf tree in Figure~\ref{fig:treesymm}, with clades ${\{1,2\},\{3,4\},\{3,4,5\}}$.
For simplicity assume that the data generating distribution has all parameters $\theta_A$
in (\ref{eq:treedef}) equal to one. 
The parameters of \cite[Theorem~1.2]{sturmfels2019brownian}, used in our proof above, are
	$$
	\bs t=(t_1,t_2,t_3,t_4,t_5,t_6,t_7,t_8,t_9)\;=\;(3,3,4,4,3,2,3,2,1).
$$

For each sample size $n=20,30,50,100,500$, we run $10000$ iterations to study the gradient of the likelihood function evaluated at the dual MLE given in Theorem~\ref{thm:bmtdual}. If the dual MLE is close to the MLE then we expect that the mean value of the norm of the gradient vector will be small. In Figure~\ref{fig:meangradient} the average norm for given sample sizes  is shown, and it is very close to zero even for moderate sample sizes. If $n=50$ this quantity is $0.13$ and for $n=500$ it is negligible and equal to $0.01$.	  

Asymptotic normality means that $\sqrt{N}(\check{\bs t}-\bs t)$ follows a normal distribution with mean zero and with covariance equal to the inverse of the Fisher information matrix; see \cite[Theorem 3.2]{christensen1989statistical}. Figure~\ref{fig:meangradient} 
shows the histogram of the estimates for the values $\sqrt{N}(\check t_9-t_9)$ 
for $N=20$ and $1000$.  Asymptotic normality is obvious  in this picture. 
\end{exmp}

The estimates in the previous example
were computed by evaluating the
function  in Theorem \ref{thm:bmtdual}.
We stress that nonlinear algebra and our software (\ref{eq:uurrll})
played an essential role in getting to this point. Namely, 
with computations as described in Section~\ref{sec5},
we created Table \ref{tab:BMT}. After seeing that
table, we conjectured that the dual MLE for 
binary trees is one. This led us to find the rational formula.
The expression (\ref{eq:dualBMT})
   is an alternating product of linear forms,
reminiscent of \cite[Theorem~1]{DMS}.
However, this structure does not generalize, by
Example \ref{ex:onetwo}, thus underscoring
Problem~\ref{prob:OrlandosProblem}.

\section*{Acknowledgements}
We thank Steffen Lauritzen for useful discussions
and for providing references.
We also thank
Jane Coons, Orlando Marigliano and Michael Ruddy
for their comments.
PZ was supported from the Spanish Government grants (RYC-2017-22544,PGC2018-101643-B-I00,SEV-2015-0563), and Ayudas Fundaci\'on BBVA a Equipos de Investigaci\'on Cientifica 2017.
ST was supported by the Deutsche Forschungsgemeinschaft (German Research Foundation) Graduiertenkolleg {\em Facets of Complexity} (GRK~2434).

\bigskip 

\bibliographystyle{alpha}
  \newcommand{\etalchar}[1]{$^{#1}$}

\end{document}